\newlength{\prml}
\newcommand{\ket}[1]{{\left|#1\right\rangle}}
\newcommand{\abs}[1]{{\lvert #1\rvert}} 
\newlength{\commentslength}
\newcommand{\rem}[1]{}
\newtheorem{theorem}{Theorem}
\newtheorem{claim}[theorem]{Claim}
\newfont{\subsubsecfnt}{ptmri8t at 11pt}
\renewcommand{\subparagraph}[1]{\smallskip{\subsubsecfnt #1.}}
\newcommand{\eqnref}[1]{\hyperref[#1]{{(\ref*{#1})}}}
\newcommand{\thmref}[1]{\hyperref[#1]{{Theorem~\ref*{#1}}}}
\newcommand{\lemref}[1]{\hyperref[#1]{{Lemma~\ref*{#1}}}}
\newcommand{\corref}[1]{\hyperref[#1]{{Corollary~\ref*{#1}}}}
\newcommand{\defref}[1]{\hyperref[#1]{{Definition~\ref*{#1}}}}
\newcommand{\secref}[1]{\hyperref[#1]{{Section~\ref*{#1}}}}
\newcommand{\figref}[1]{\hyperref[#1]{{Fig.~\ref*{#1}}}}
\newcommand{\Figref}[1]{\hyperref[#1]{{Figure~\ref*{#1}}}}
\newcommand{\tabref}[1]{\hyperref[#1]{{Table~\ref*{#1}}}}
\newcommand{\remref}[1]{\hyperref[#1]{{Remark~\ref*{#1}}}}
\newcommand{\appref}[1]{\hyperref[#1]{{Appendix~\ref*{#1}}}}
\newcommand{\claimref}[1]{\hyperref[#1]{{Claim~\ref*{#1}}}}
\newcommand{\factref}[1]{\hyperref[#1]{{Fact~\ref*{#1}}}}
\newcommand{\propref}[1]{\hyperref[#1]{{Proposition~\ref*{#1}}}}
\newcommand{\exampleref}[1]{\hyperref[#1]{{Example~\ref*{#1}}}}
\newcommand{\conjref}[1]{\hyperref[#1]{{Conjecture~\ref*{#1}}}}
\begin{document}

\preprint{APS/123-QED}

\title{Majorana qubit codes that also correct odd-weight errors}

\author{Sourav Kundu}
\email{souravku@usc.edu}
\author{Ben W. Reichardt}
\affiliation{Ming Hsieh Department of Electrical Engineering, University of Southern California, Los Angeles, California 90007, USA}


\begin{abstract}
The tetron architecture is a promising candidate for topological quantum computation. Each tetron Majorana island has four Majorana zero modes, and possible measurements are constrained to span zero or two Majoranas per tetron. Such measurements are known to be sufficient for correcting so-called ``bosonic errors," which affect an even number of Majoranas per tetron. We demonstrate that such measurements are also sufficient for correcting ``fermionic errors," which affect an odd number of Majoranas per tetron. In contrast, previous proposals for ``fermionic error correction" on tetrons introduce more experimental challenges. 

We show that ``fermionic codes" can be derived from traditional ``bosonic codes" by inclusion of tetrons in the stabilizer group. 
  
\end{abstract}

\maketitle

\section{Introduction} 
\label{sec:introduction}

Topological quantum computation has garnered an increasing interest in the recent years \cite{Kitaev2001, Kitaev2003, Kitaev2006, Nayak2008, Sau2010, Bravyi2010, Alicea2012, Leijnse2012, Mong2014, Sarma2015, Aasen2016, Lutchyn2018, Flensberg2021, Aghaee2023}. 
Just as topology cannot be changed by local perturbations, likewise topologically protected quantum information cannot be affected by most local errors. Some local errors can be problematic and need correction. In particular, we discuss error correction schemes for the Majorana-based tetron architecture \cite{Karzig2017, Litinski2017, Litinski2018, Knapp2018, Karzig2019}, shown in \figref{fig:tetron_hardware}.

\begin{figure}[htpb]
	{\includegraphics[width=0.843\linewidth]{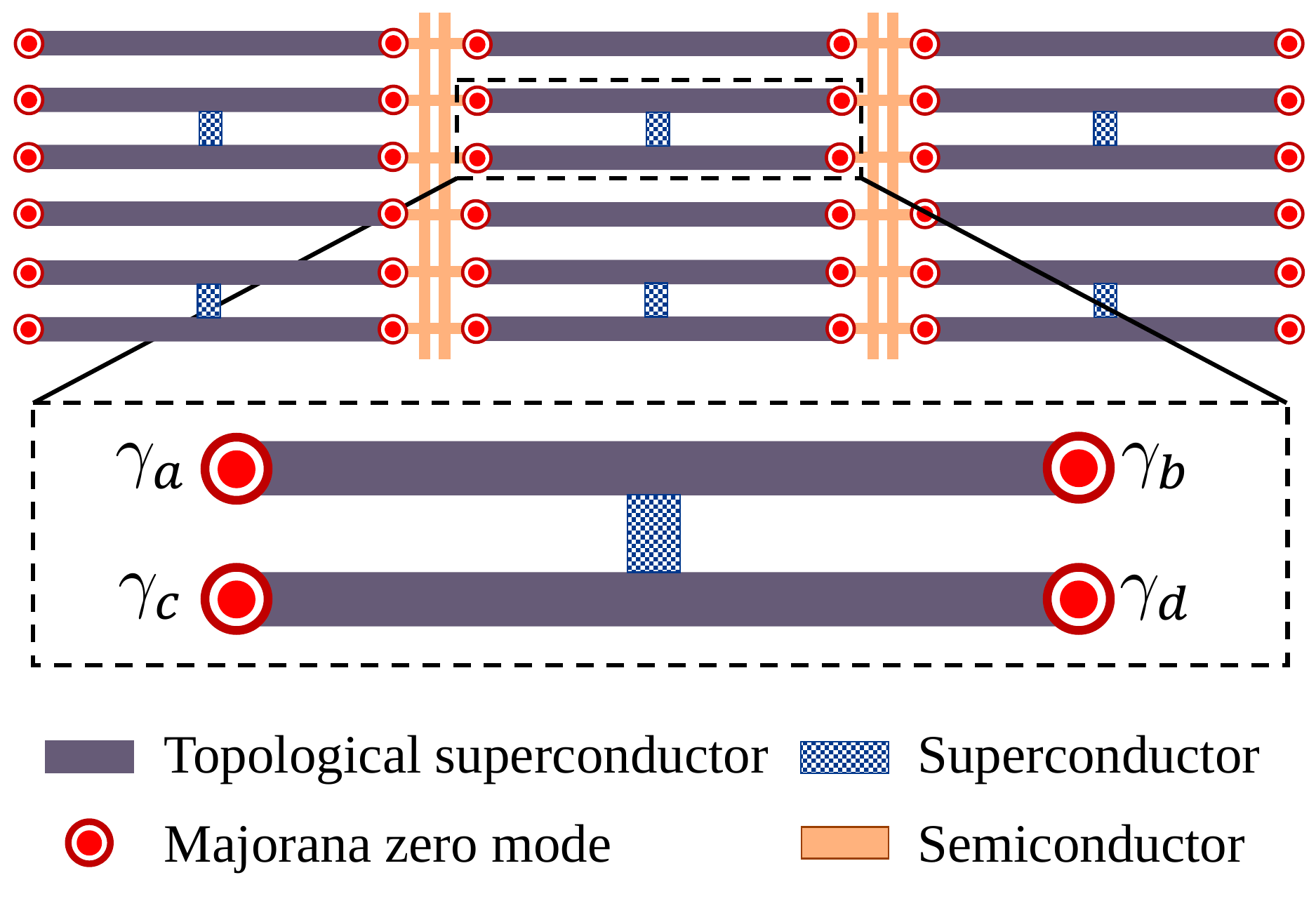}}
	\caption{
		The above figure illustrates a tetron architecture. A single tetron comprises two topological superconducting nanowires that host four Majorana zero modes at their ends, and they are connected by a superconductor.}
	\label{fig:tetron_hardware}
\end{figure}

A tetron is a superconducting island, hosting four localized Majorana zero modes (MZMs), schematically shown in \figref{fig:tetron}. One can measure the fermion parity of any operator that spans zero or two MZMs per tetron. Using such measurements, we can define stabilizer codes in a system of several tetrons. 
The possible errors in a tetron architecture include ``bosonic" and ``fermionic" errors. Bosonic errors are those which affect two MZMs per tetron, while fermionic errors affect an odd number of MZMs per tetron. Bosonic errors can be mapped to Pauli errors, and can be corrected using conventional stabilizer codes. This is not the case for fermionic errors.

\begin{figure}[t]
        \begin{tabular}{c c}
        		\hspace{5 pt}
                \subfigure[\label{fig:tetron}]{\includegraphics[width=0.15\linewidth]{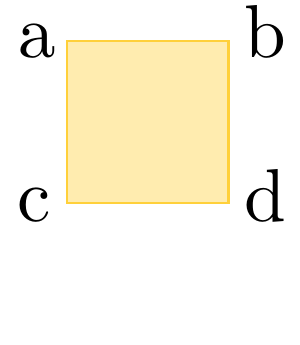}}
                \hspace{12 pt}
                &
                \subfigure[\label{fig:tetron_representation}]{\includegraphics[width=0.7\linewidth]{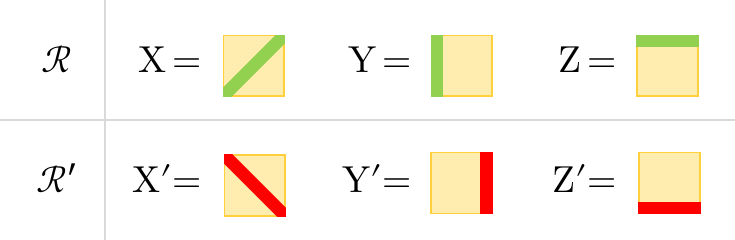}}
        \end{tabular}
        \caption{
        		(a) A tetron hosts four MZMs at the locations \(a,b,c,d\). 
                (b) \(\mathcal{R}\) and \(\mathcal{R}'\) are two complementary sets of weight-2 operators in a tetron.} 
        \label{fig:tetron_and_representation}
\end{figure}

Fermionic errors are typically suppressed by high charging energy. Short-lived fermionic errors relax to bosonic errors. However, a sufficiently long-lived fermionic error can disrupt measurement outcomes, and even spread to adjacent tetrons during connected measurements. Hence Ref.~\onlinecite{Knapp2018} suggests the use of ``fermionic codes'' that can correct fermionic errors, so that they do not spread and cause a logical error.

Although several Majorana fermionic codes have been proposed \cite{Hastings2017a,Vijay2017}, there exist implementation challenges specific to the tetron architecture. For example, one of the chief experimental hurdles in employing a fermionic code is the inability to directly measure the four-MZM parity of a tetron \cite{Knapp2018}. If the tetron parity were measurable, then a fermionic error could be detected. However, a measurement can overlap only two MZMs per tetron. We will show that such measurements suffice for fermionic error correction with two strategies.

In Sec.~\ref{sec:fermionic_codes_1}, we derive fermionic error correcting codes from conventional bosonic codes such as color codes and surface codes. We introduce additional stabilizers in these codes, in which one qubit has a different Pauli to MZM operator mapping. This places the tetron in the stabilizer group, enabling fermionic error correction. \Figref{fig:example_1_recipe_1_and_pictorial_product} shows an example.

\begin{figure}[htpb]
        \begin{tabular}{c}
        		\subfigure[\label{fig:example_1_generators}]{\raisebox{0mm}{\includegraphics[width=0.4\linewidth]{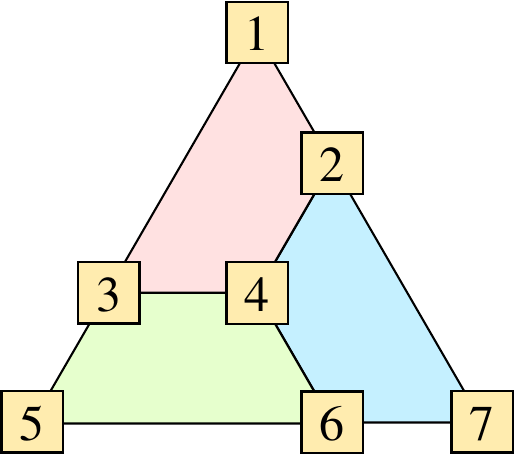}}}
        		\\
                \subfigure[\label{fig:example_1_recipe_1}]{\raisebox{0mm}{\includegraphics[width=0.95\linewidth]{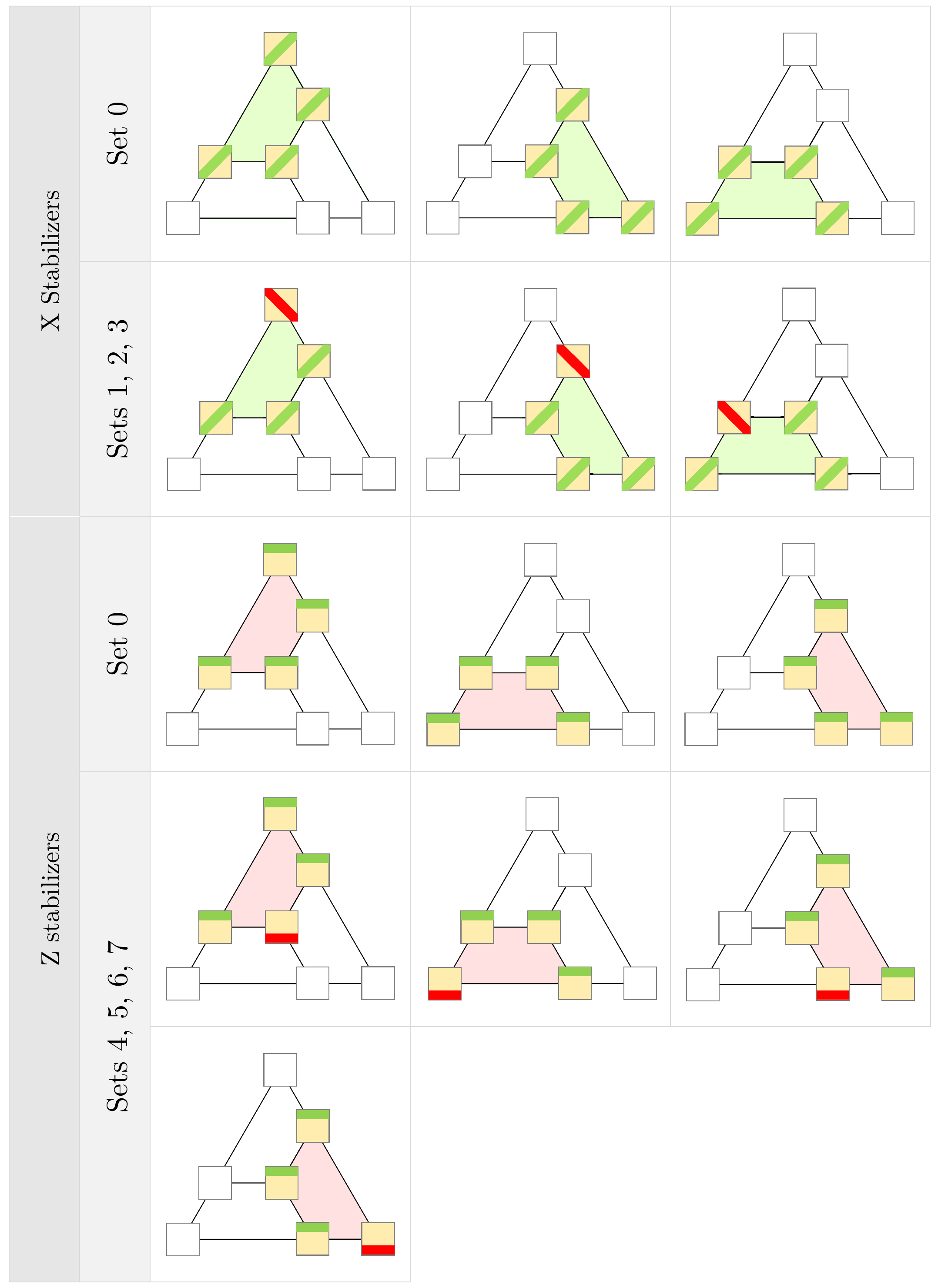}}}
                \\
                \subfigure[\label{fig:pictorial_product}]{\raisebox{0mm}{\includegraphics[width=0.65\linewidth]{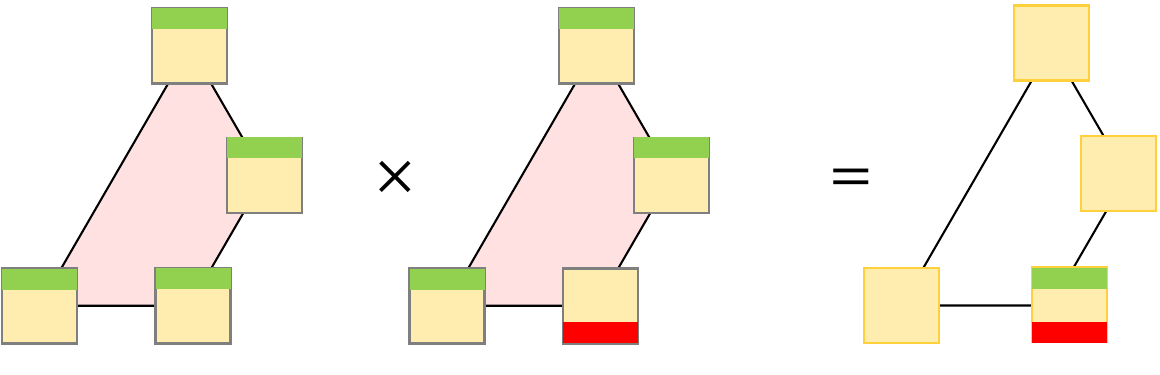}}}
                \\
        \end{tabular}
        \caption{
        	(a) The figure shows the stabilizer generators of the \(\llbracket 14,1,6_f \rrbracket\)  fermionic code. The three colored plaquettes correspond to three X stabilizers and three Z stabilizers, similar to the Steane color code. In addition, the seven tetrons also belong to the stabilizer generator group, and they are shown as yellow squares. Although the tetrons belong to the stabilizer generator group, they are not directly measurable. 
        	(b) The figure shows the measurable stabilizers of the \(\llbracket 14,1,6_f \rrbracket\)  fermionic code, derived from the \(\llbracket 7,1,3_b \rrbracket\)  bosonic color code. Each stabilizer plaquette is supported on four Majorana operators at its four vertices. These operators are defined according to \figref{fig:tetron_representation}. 
         	(c) The figure demonstrates that the stabilizer group contains one tetron operator. Similarly, all seven tetrons belong to the stabilizer group.
        } 
        \label{fig:example_1_recipe_1_and_pictorial_product}
\end{figure}

Finally, in Sec.~\ref{sec:error_analysis}, we study the code capacity of a fermionic code family, and provide an example of a fault-tolerant measurement scheme.

\section{Error correction principle}
\label{sec:principle}

We begin with a brief introduction to general Majorana operators and general Majorana error correcting codes. Then we discuss the principles of correcting errors specific to the tetron architecture. We discuss the challenges of correcting fermionic errors, and then put forward our proposal.

\subsection{Majorana operators}
\label{ssec:majorana_operators}
Consider a system with \(N_{Maj}\) MZMs \(\gamma_1, \gamma_2, \dots \gamma_{N_{Maj}}\), where \(N_{Maj}\) is even. They satisfy:
\begin{equation*}
\gamma_j = \gamma_j^\dag, \; \gamma_j^2 =\mathbb{I}, \; \{\gamma_j,\gamma_k\} = 2\delta_{j,k}
\enspace .
\end{equation*}

In this system, we can define a Majorana operator \(\gamma_S = i^{\abs{S}/2} \prod_{j \in S} \gamma_j\), which is supported on an MZM set \(S\subseteq\left\{ 1,2,\dots,N_{Maj} \right\}\). This Majorana operator may also be alternately represented as \(\ket{S}=\sum_{j \in S}\ket{j}\), where the MZM \(\ket{j}\) is a basis vector in \( \mathbb{F}_{2}^{N_{Maj}} \). The commutation relation between two Majorana operators \(\ket A\) and \(\ket B\) is given by \(\ket A \times \ket B = (-1)^{|A| \cdot |B| + |A\cap B|} \times \ket B \times \ket A \). However, parity measurement is only possible for Majorana operators which are supported on an even number of MZMs. Two even-weight Majorana operators commute only if they overlap on an even number of Majorana modes. The parity measurement of an even-weight Majorana operator \(\ket M\) yields either even parity (0) or odd parity (1). However, if it is affected by an error \(\ket E\) that anticommutes with \(\ket M\), then the parity measurement of \(\ket M\) is toggled. On the other hand, if \(\ket E\) commutes with \(\ket M\), then the parity measurement of \(\ket M\) is unaffected.

\subsection{Majorana codes}
\label{ssec:majorana_codes}
The total number of MZMs in a system must be an even integer, say \(N_{Maj}\). They can be partitioned into \(N_{Maj}/2\) MZM pairs, wherein each partitioning choice corresponds to a basis. Each MZM pair exists in a superposition of two degenerate parity eigenstates. Thus, the Majorana system has \(n=N_{Maj}/2\) degrees of freedom and \(2^n\) degenerate parity eigenstates.

Among these \(n\) degrees of freedom, we restrict \(n_s=n-k\) degrees of freedom by requiring that the allowed wavefunctions are eigenstates of \(n_s\) stabilizer operators. This leaves us with \(k\) degrees of freedom, wherein \(k\) logical qubits can be encoded. Thus, we can form an \(\llbracket n,k,d_f \rrbracket\) stabilizer code, where \(d_f\) denotes the minimum Majorana weight of all logical operators. 

\subsection{Tetrons}
\label{ssec:tetrons}

A tetron is a superconducting island with four Majorana zero modes (MZMs), maintained in an overall even parity state by high charging energy \cite{Karzig2017}. The MZMs of a tetron are denoted as \(\gamma_m\), where the character \(m\) denotes the MZM location according to \figref{fig:tetron}.

A system with \(4n\) Majorana modes would have \(2n\) degrees of freedom. But if they are implemented on a tetron architecture, then \(n\) degrees are restricted by the requirement of even parity on all \(n\) tetrons. Thus, each tetron is left with a single degree of freedom, wherein one qubit information can be stored \cite{Kitaev2006}. Each Pauli operator of this qubit has two possible representations, as shown in \figref{fig:tetron_representation}. For example, the Pauli operator \(X\) can be represented as either \(\gamma_b \gamma_c\) or \(\gamma_a \gamma_d\). Both of them yield the same parity measurement since the tetron has zero parity in total.

While addressing Majorana codes described on multiple tetrons, we would often use a numerical subscript \(q\) to denote that Majorana operators such as \(\gamma_{aq}\) or \(X_q\) correspond to the \(q\text{th}\) tetron.

\subsection{Bosonic and fermionic errors in tetrons}
\label{ssec:errors_in_tetrons}
A tetron might be affected by an error that affects 1, 2, 3 or all 4 of its MZMs. An odd-weight error is said to be a fermionic error while an even-weight error is said to be a bosonic error \cite{Viyuela2019}. Among these errors, the weight-4 error is trivial since it does not affect the parity of any Pauli operator. Also, a weight-3 error is equivalent to a weight-1 error, for example \(\gamma_a \equiv \gamma_b \gamma_c \gamma_d\). So, error correction is only required for weight-2 Pauli errors and weight-1 fermionic errors.

\subsection{Bosonic error correction in tetron architecture}
\label{ssec:bosonic_error_correction}
A tetron is maintained in an overall even parity state by a high charging energy, which suppresses fermionic errors \cite{Karzig2017,Knapp2018}. As long as the charging energy preserves even parity on tetrons, only Pauli errors can occur on a tetron, and a conventional bosonic error correcting code will be sufficient to correct it.

In a system with \(n\) tetrons maintained in even parity, we have \(n\) degrees of freedom. We can define \(n_s\) stabilizer operators which restrict \(n_s\) degrees of freedom and leaves \(k=n-n_s\) degrees for encoding \(k\) logical qubits. This forms an \(\llbracket n,k,d_b \rrbracket\) bosonic error correcting code, where \(d_b\) denotes the minimum qubit weight of all logical operators.

Note that each measurable Majorana operator should span zero or two MZMs per tetron, so as to preserve the even parity subspace of each tetron qubit. Parity measurement of all four MZMs of a tetron are not considered because of experimental challenges.

\subsection{Challenges of fermionic error correction}
\label{ssec:challenges_fermionic_error_correction}

If the charging energy is insufficient to prevent long-lived fermionic errors on tetrons, then we require a fermionic error correcting code. In a system with \(n\) tetrons, where the tetrons are susceptible to fermionic errors, there are \(2n\) degrees of freedom. Thus, a fermionic error correcting code on \(n\) tetrons would have \(\llbracket 2n,k,d_f \rrbracket\) parameters, where \(d_f\) denotes the minimum Majorana weight of all logical operators.

One possible approach is to restrict \(n\) degrees of freedom by introducing \(n\) stabilizers, where the \(q\text{th}\) stabilizer is supported on the 4 MZMs of the \(q\text{th}\) tetron, such as in Ref.~\onlinecite{Litinski2018}. The syndrome of these \(n\) stabilizers would allow us to detect fermionic errors on any tetron. When coupled with syndromes of other stabilizers, it would enable us to correct both fermionic and bosonic errors affecting the Majorana code. Unfortunately, as we have previously discussed, parity measurement of all four MZMs of a tetron has turned out to be experimentally challenging \cite{Knapp2018}.
A second approach involves dynamically varying the number of MZMs on an island, but that has significant experimental challenges as well \cite{Knapp2018}.
Ref.~\onlinecite{Bomantara2020} proposed a fermionic error correction scheme for a quantum wire, where system parameters are tuned to generate additional Majorana modes at the wire endpoints, such that stabilizers defined on these Majorana modes can correct fermionic errors. However, this can also be experimentally challenging. 

\subsection{Proposed fermionic error correction principle}
\label{ssec:fermionic_error_correction_principle}
We describe the basic principle of fermionic error correction on a single tetron. We define two complementary sets of weight-2 operators, shown in \figref{fig:tetron_representation}.
\begin{align*}
\mathcal{R}=\{X,Y,Z\}, \hspace{2.2 pt} & \text{ where } X=\gamma_b \gamma_c, Y=\gamma_a \gamma_c, Z=\gamma_a \gamma_b\\
\mathcal{R}'\hspace{-\prml}=\{X'\hspace{-\prml},Y'\hspace{-\prml},Z'\}, \hspace{-\prml} & \text{ where } X'\hspace{-\prml}=\gamma_a \gamma_d, Y'\hspace{-\prml}=\gamma_d \gamma_b, Z'\hspace{-\prml}=\gamma_c \gamma_d
\end{align*}

If a weight-2 error affects the tetron, then both sets of operators are similarly affected. For example, a weight-2 error \(\gamma_b \gamma_c\) toggles the parity of operators \(Y,Y',Z,Z'\), while leaving the \(X,X'\) operators unaffected. This forms the building block of all bosonic error correction schemes. 
However, if a weight-1 error affects the tetron, then the two sets of operators are oppositely affected. For example, a fermionic error \(\gamma_a\) toggles the parity of \(Y\) and \(Z\), while leaving \(X\) unaffected. Conversely, the same \(\gamma_a\) error leaves the parity of \(Y'\) and \(Z'\) unaffected, while toggling the parity of \(X'\). Thus, we can identify and correct a fermionic error on a tetron. This is the building block of our fermionic error correction proposal. 

\section{Review of bosonic codes}
\label{sec:bosonic_codes}
Before moving on to fermionic error correcting codes, we shall discuss how bosonic codes may be used to deal with long-lived fermionic errors, as well as its shortcomings. Consider an \(\llbracket n,k,d_b \rrbracket\)  bosonic code implemented on a tetron architecture, where each Pauli operator maps to a Majorana operator in \(\mathcal{R}\) \cite{Bravyi2010}. This code is capable of correcting up to \(t_b=\lfloor(d_b-1)/2\rfloor\) number of bosonic errors if no fermionic errors occur. Now suppose that the Majorana code is not only affected by bosonic errors on the tetron set \(T_b\), but also by fermionic errors of types \(\gamma_a,\gamma_b,\gamma_c,\gamma_d\) in the tetron sets \(T_{fa},T_{fb},T_{fc},T_{fd}\) respectively. Observe that a fermionic \(\gamma_x\) error on a tetron yields the same syndromes as a bosonic error \(\gamma_x \gamma_d\) (or its equivalent) on that tetron. Thus, fermionic errors of types \(\gamma_a,\gamma_b,\gamma_c\) are identified as \(X, Y, Z\) errors respectively, while \(\gamma_d\) errors are invisible. Thus, a \(\gamma_x\) error corresponds to a \(\gamma_x \gamma_d\) correction. Such a correction can be uniquely performed only if \(\vert T_b \vert +\vert T_{fa} \vert +\vert T_{fb} \vert +\vert T_{fc} \vert \leq t_b\).
If we correct a \(\gamma_x\) error by a \(\gamma_x \gamma_d\) correction, then we essentially shift the error from \(\gamma_x\) to \(\gamma_d\). If a fermionic error occurs at \(\gamma_d\), then it would stay there undetected. Thus, the \(d\text{th}\) MZM would act as a reservoir for fermionic errors at all locations of the tetron.

It might appear that the remnant \(\gamma_d\) errors do not affect either stabilizers or logical operators, and hence they can be left uncorrected. However, Ref.~\onlinecite{Knapp2018} notes that such uncorrected excitations may propagate from one tetron to other neighboring tetrons via connecting measurements, and finally culminate in errors of high weight. To efficiently mitigate the spread of such excitations, it recommends the usage of fermionic codes.

\section{B \texorpdfstring{\(\boldsymbol{\mapsto}\)}{to} F codes: Fermionic codes from bosonic codes}
\label{sec:fermionic_codes_1}

We show that a bosonic code with parameters \(\llbracket n,k,d_b \rrbracket\)  can be translated into a fermionic Majorana code with parameters \(\llbracket 2n,k,d_f \rrbracket\)  where \(d_f=2d_b\).

\subsection{Recipe for fermionic code construction}
\label{ssec:recipe_1}
We choose a bosonic stabilizer code \(\llbracket n,k,d_b \rrbracket\)  from any scalable code family. We can derive a new Majorana fermion code from the \(n_s\) stabilizers of the bosonic code. The Majorana fermion code contains several stabilizers, which we group into overlapping sets for convenience. 
\begin{itemize}
\item 
Set 0 contains all the \(n_s\) stabilizers of the bosonic code, wherein every Pauli operator of the bosonic code maps to Majorana operators in \(\mathcal R\).
\item
Set \(i\) contains all the \(n_s\) stabilizers of the bosonic code, wherein Pauli operators of the \(i\text{th}\) qubit map to Majorana operators in \(\mathcal R'\), and all other Pauli operators map to Majorana operators in \(\mathcal R\). Here, \(i\) varies from 1 to \(n\).
\end{itemize}
Thus, we have \(n+1\) overlapping stabilizer sets, which form a stabilizer group of rank \(2n-k\). This stabilizer group characterizes the new Majorana fermion code with \(\llbracket 2n,k,d_f \rrbracket\)  parameters.

We provide an example of this scheme in \figref{fig:example_1_recipe_1}, which shows the \(\llbracket 14,1,6_f \rrbracket\)  fermionic code derived from a \(\llbracket 7,1,3_b \rrbracket\)  bosonic code. Its logical qubit is characterized by \(\bar X = X_1 X_3 X_5, \bar Y = Y_1 Y_2 Y_7, \bar Z = Z_5 Z_6 Z_7\).

\begin{claim}
 Any combination of fermionic error yields a non-zero syndrome on a \(B \mapsto F\) Majorana fermion code.
\end{claim}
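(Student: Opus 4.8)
The plan is to exhibit an explicit element of the stabilizer group that every fermionic error anticommutes with, namely the weight-$4$ tetron parity operator. First I would establish that, for each tetron $q$, the parity operator $\ket{P_q}=\gamma_{aq}\gamma_{bq}\gamma_{cq}\gamma_{dq}$ lies in the stabilizer group of the $B\mapsto F$ code. To see this, pick any bosonic stabilizer $S$ whose support in the underlying $\llbracket n,k,d_b\rrbracket$ code includes qubit $q$. Its Set-$0$ image and its Set-$q$ image agree on every tetron except $q$, where one carries a Pauli drawn from $\mathcal R$ and the other the same Pauli drawn from $\mathcal R'$. Since any element of $\mathcal R$ times the corresponding element of $\mathcal R'$ equals the tetron parity up to phase (e.g. $(\gamma_b\gamma_c)(\gamma_a\gamma_d)\propto\gamma_a\gamma_b\gamma_c\gamma_d$), the product of these two \emph{measurable} stabilizers is exactly $\ket{P_q}$, so $\ket{P_q}$ is generated by measurable stabilizers.

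Next I would compute the commutation of an arbitrary error $\ket{E}$ with $\ket{P_q}$ using the relation $\ket{E}\times\ket{P_q}=(-1)^{\abs{E}\cdot\abs{P_q}+\abs{E\cap P_q}}\ket{P_q}\times\ket{E}$. Because $\abs{P_q}=4$ is even, the term $\abs{E}\cdot\abs{P_q}$ vanishes modulo $2$, leaving the sign $(-1)^{w_q}$ with $w_q=\abs{E\cap P_q}$ the weight of $\ket{E}$ on the four MZMs of tetron $q$. Hence $\ket{E}$ anticommutes with $\ket{P_q}$ precisely when $w_q$ is odd, i.e. precisely when $\ket{E}$ acts as a fermionic error on tetron $q$.

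To finish, note that a fermionic error is by definition one with odd weight on at least one tetron $q$; by the previous step it anticommutes with $\ket{P_q}$. Since $\ket{P_q}$ factors as a product of the two measurable generators identified above, $\ket{E}$ must anticommute with an odd number of them, hence with at least one. That flipped generator contributes a nontrivial bit to the measured syndrome, so the syndrome is non-zero, as claimed.

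The main obstacle I anticipate is bridging the gap between ``$\ket{E}$ anticommutes with $\ket{P_q}$'' and ``the \emph{measured} syndrome is non-zero,'' because $\ket{P_q}$ is itself weight-$4$ and therefore not directly measurable on a tetron. The resolution is exactly the factorization of $\ket{P_q}$ into two measurable Set-$0$/Set-$q$ generators: anticommuting with a product forces anticommuting with at least one factor. The only assumption this argument uses is that each qubit lies in the support of some bosonic stabilizer, which holds for the scalable families (color codes, surface codes) the recipe draws from; I would state this explicitly so that $\ket{P_q}$ is guaranteed to be in the generated group for every $q$.
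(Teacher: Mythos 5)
Your proof is correct and takes essentially the same approach as the paper's: both identify the weight-4 tetron parity \(\gamma_{aq}\gamma_{bq}\gamma_{cq}\gamma_{dq}\) as an element of the stabilizer group, obtained as the product of the independent set-\(q\) stabilizer with its corresponding set-0 stabilizer, and both conclude that an error of odd weight on tetron \(q\) anticommutes with this parity and therefore flips one of the two measurable factors, producing a non-zero syndrome. Your write-up merely makes explicit the commutation computation and the assumption (implicit in the paper) that every qubit lies in the support of some bosonic stabilizer.
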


\begin{proof}
Suppose a Majorana fermion code is affected by some fermionic errors and optionally some more bosonic errors. The fermionic error would anticommute with one or more tetron operators, as they have odd intersection. The tetron operator belongs to the stabilizer group, as the \(i{\text{th}}\) tetron operator can be obtained by multiplying the independent stabilizer in set \(i\) with the corresponding stabilizer in set 0, as shown in \figref{fig:pictorial_product}. Thus, any fermionic error would yield non-zero syndrome on a \(B \mapsto F\) code. If the \(i\text{th}\) tetron is affected by fermionic error, the corresponding operators in \(\mathcal R\) and \( \mathcal R'\) would yield opposite parities, and so we would measure opposite parities on the independent stabilizer of set \(i\) and the corresponding set 0 stabilizer.
\end{proof}

\subsection{Code distance}
\label{ssec:recipe_1_distance}
In this section, we show that the derived fermionic codes have the same logical operator as the bosonic code. Hence, the fermionic code distance, or the least Majorana weight of its logical operators, is given by twice the Pauli distance of the bosonic code.  

\begin{claim}
If the bosonic code has \(\llbracket n,k,d_b \rrbracket\)  parameters, then the resultant fermionic code has \(\llbracket 2n,k,d_f \rrbracket\)  parameters, where \(d_f=2d_b\).
\end{claim}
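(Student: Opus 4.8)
The plan is to transfer the distance question to the already-analyzed bosonic code by showing that, modulo the stabilizer group, the logical operators of the fermionic code are exactly the Majorana liftings of the bosonic logical operators. First I would settle the two easy parameters. The system of $n$ tetrons carries $4n$ MZMs and hence $2n$ degrees of freedom, which fixes the first entry. Since the stabilizer group (which I denote $S$) has rank $2n-k$, as established in the construction, the number of remaining logical degrees of freedom is $2n-(2n-k)=k$, fixing the second entry.

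For the distance, the key structural fact is that every logical operator is \emph{bosonic}, i.e.\ it has even Majorana weight on each individual tetron. Indeed, the $n$ tetron operators lie in $S$ (the preceding Claim), and an operator with odd weight on some tetron anticommutes with that tetron's operator; such an operator therefore produces a nonzero syndrome and cannot normalize $S$. Consequently I can restrict attention to the even-parity subspace of each tetron, on which every weight-$2$ operator acts as a Pauli on the associated tetron qubit. On this $2^n$-dimensional subspace the tetron operators act trivially and the stabilizers of set $0$ act precisely as the bosonic code's stabilizers, so restriction induces a map from the fermionic logical group $N(S)/S$ to the bosonic logical group. A counting check---both sides encode $k$ qubits, and every bosonic logical Pauli lifts via $\mathcal{R}$ to a bosonic Majorana operator commuting with $S$---shows this map is an isomorphism, and in particular carries nontrivial classes to nontrivial classes.

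With this correspondence in hand, the distance follows from weight bookkeeping. For the upper bound I would take a minimum-weight bosonic logical Pauli $L$, of Pauli weight $d_b$; lifting each of its $d_b$ single-qubit Paulis to the corresponding weight-$2$ operator of $\mathcal{R}$ produces a fermionic logical operator of Majorana weight $2d_b$, so $d_f\le 2d_b$. For the lower bound I would start from an arbitrary nontrivial logical operator and pass to a minimum-weight representative of its class: since multiplying by a tetron operator turns a weight-$4$ tetron into a weight-$0$ one, and leaves a weight-$2$ tetron at weight $2$ while merely swapping $\mathcal{R}\leftrightarrow\mathcal{R}'$, a minimum-weight representative has weight $0$ or $2$ on each tetron. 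Its Majorana weight is then twice the number of weight-$2$ tetrons, which equals the Pauli weight of the corresponding bosonic logical operator; that operator is nontrivial by the isomorphism above, so its weight is at least $d_b$, and therefore $d_f\ge 2d_b$.

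The step I expect to require the most care is the isomorphism between the logical groups---specifically, confirming that a fermionic logical operator that is nontrivial cannot restrict to a trivial bosonic operator. This reduces to checking that if the $\mathcal{R}$-lifting of $L$ lies in $S$ then $L$ lies in the bosonic stabilizer group, which I would verify using that on any tetron the product of an $\mathcal{R}$-representative with the corresponding $\mathcal{R}'$-representative is exactly that tetron operator, so the two liftings agree modulo $S$. The weight-reduction bookkeeping is then routine, once one notes that the tetron operator is the unique nontrivial weight-$4$ operator on a tetron and that it exchanges the complementary pairs $\mathcal{R}$ and $\mathcal{R}'$.
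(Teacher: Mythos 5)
Your proof is correct and follows essentially the same route as the paper: count degrees of freedom against stabilizer rank to obtain \(k\) encoded qubits, identify the fermionic logical classes with the \(\mathcal{R}\)-liftings of the bosonic logical operators, and convert Pauli weight to Majorana weight by a factor of two. Your version is in fact more complete on the distance: the paper only verifies that lifted bosonic logicals remain logical and asserts the two codes share the same logical operators, which by itself yields only \(d_f\le 2d_b\), whereas your observation that the tetron stabilizers force every normalizer element to have even Majorana weight on each tetron, combined with the weight-reduction step modulo tetron operators (weight-4 support cancels, weight-2 support merely swaps \(\mathcal{R}\leftrightarrow\mathcal{R}'\)), supplies the lower bound \(d_f\ge 2d_b\) that the paper leaves implicit.
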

\begin{proof}
The bosonic code has \(n\) degrees of freedom and \(n-k\) stabilizer generators, so it has \(k\) logical operators. 
The fermionic code has \(2n\) degrees of freedom, and \(2n-k\) stabilizer generators. The fermionic code has \(n\) additional tetron stabilizers, as compared to the bosonic code. Thus, the fermionic code also has \(k\) logical operators. 

Next, we note that the \(k\) logical operators in the bosonic code are also valid logical operators for the fermionic code. This is true because the set 0 stabilizers of the fermionic code are the same as the stabilizers of the bosonic code. Furthermore, each logical operator of the bosonic code can be mapped to a Majorana operator that spans 2 MZMs per tetron, and hence it commutes with each tetron stabilizer. So, the \(k\) logical operators of the bosonic code satisfy all stabilizers of the fermionic code, and are the same as the \(k\) logical operators of the fermionic code. 

Finally, the code distance of the bosonic code is \(d_b\), meaning that the logical operators have a least weight of \(d_b\) Pauli operators. Since the same set of logical operators are shared between both codes, and each Pauli operator can be mapped to 2 MZMs, so the fermionic code has code distance \(d_{f} = 2d_{b}\), where \(d_f\) is the least Majorana weight of its logical operators. 
\end{proof}

\subsection{Decoder}
\label{ssec:recipe_1_decoder}
We use the BPOSD decoder, proposed by Roffe et al.~\cite{Roffe2020, Roffe2022} for error correction. A code with distance \(d_f\) can correct any error that has Majorana weight \(< d_{f}/2\). As the code size increases, so does the code distance \(d_f\) and the error correction capacity.

\subsection{Error correction on color codes}
\label{ssec:recipe_1_latency_color}
In this section, we shall discuss Majorana fermion codes derived from CSS color codes. The error correction latency of a tetron code is limited by the fact that only operators with disjoint tetron support can be parallelly measured. Although it is theoretically possible to measure two copies of the same Pauli operator on a tetron, it has been recommended that such measurements be avoided to prevent the spread of correlated errors \cite{Knapp2018}. The CSS color code based on the 6.6.6 tessellation  is 3-colorable, so the corresponding set 0 stabilizers require 3 steps for measuring all X stabilizers and 3 steps for measuring all Z stabilizers.

If the Majorana fermion code is described on \(n\) tetrons, then it would have \(n\) independent stabilizers in addition to the set 0 stabilizers. In each of these \(n\) additional stabilizers, one of the tetron operators is switched from \(\mathcal R\) to \(\mathcal R'\). These additional stabilizer syndromes can be extracted in seven steps. So, the fermionic error correction latency is 13.
\begin{itemize}
\item 
The \(\llbracket 14,1,6_f \rrbracket\)  code has 13 independent stabilizers, each of which overlaps with the other, as shown in \figref{fig:example_1_recipe_1_and_pictorial_product}. Thus, one round of syndrome extraction requires 13 steps, and the error correction latency is 13. 
\item
The \(\llbracket 38,1,10_f \rrbracket\)  code has 37 independent stabilizers. Set 0 contains 9 \(X\) stabilizers, 9 \(Z\) stabilizers, and it needs 3 + 3 steps to be measured. Sets 1 to 19 contain one independent stabilizer each, and they need 7 steps to be parallelly measured, as shown in \figref{fig:example_2_recipe_1}. Thus, the error correction latency is 13. 
\item
The \(\llbracket 74,1,14_f \rrbracket\)  code has 73 independent stabilizers. Set 0 contains 18 \(X\) stabilizers and 18 \(Z\) stabilizers, and it needs 3 + 3 steps to be measured. Sets 1 to 37 contain one independent stabilizer each, and they need 7 steps to be parallelly measured, as shown in \figref{fig:example_3_recipe_1}. Thus, the error correction latency is 13. 
\end{itemize}

\begin{figure}[htb]
	{\includegraphics[width=0.82\linewidth]{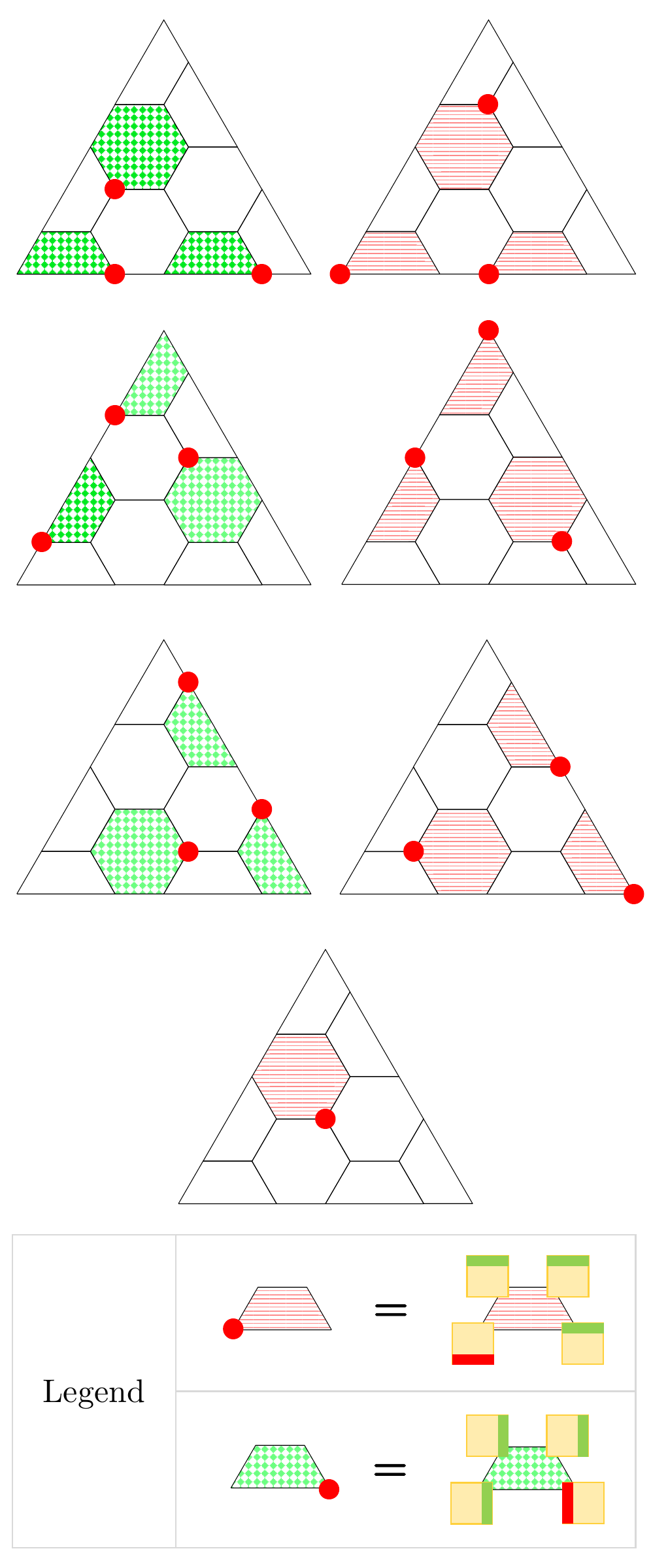}}
	\caption{
		The figure illustrates an optimized syndrome measurement sequence for stabilizers in sets 1 to 19 of \(\llbracket 38,1,10_f \rrbracket\) code. Each pink striped plaquette corresponds to a stabilizer supported on \(Z'\) at the vertex marked by a red circle, and supported on \(Z\) at all other vertices of that plaquette. Similarly, each green checkered plaquette corresponds to a stabilizer supported on \(X'\) at the vertex marked by a red circle, and supported on \(X\) at all other vertices of that plaquette.} 
	\label{fig:example_2_recipe_1}
\end{figure}

\begin{figure}[htb]
	{\includegraphics[width=\linewidth]{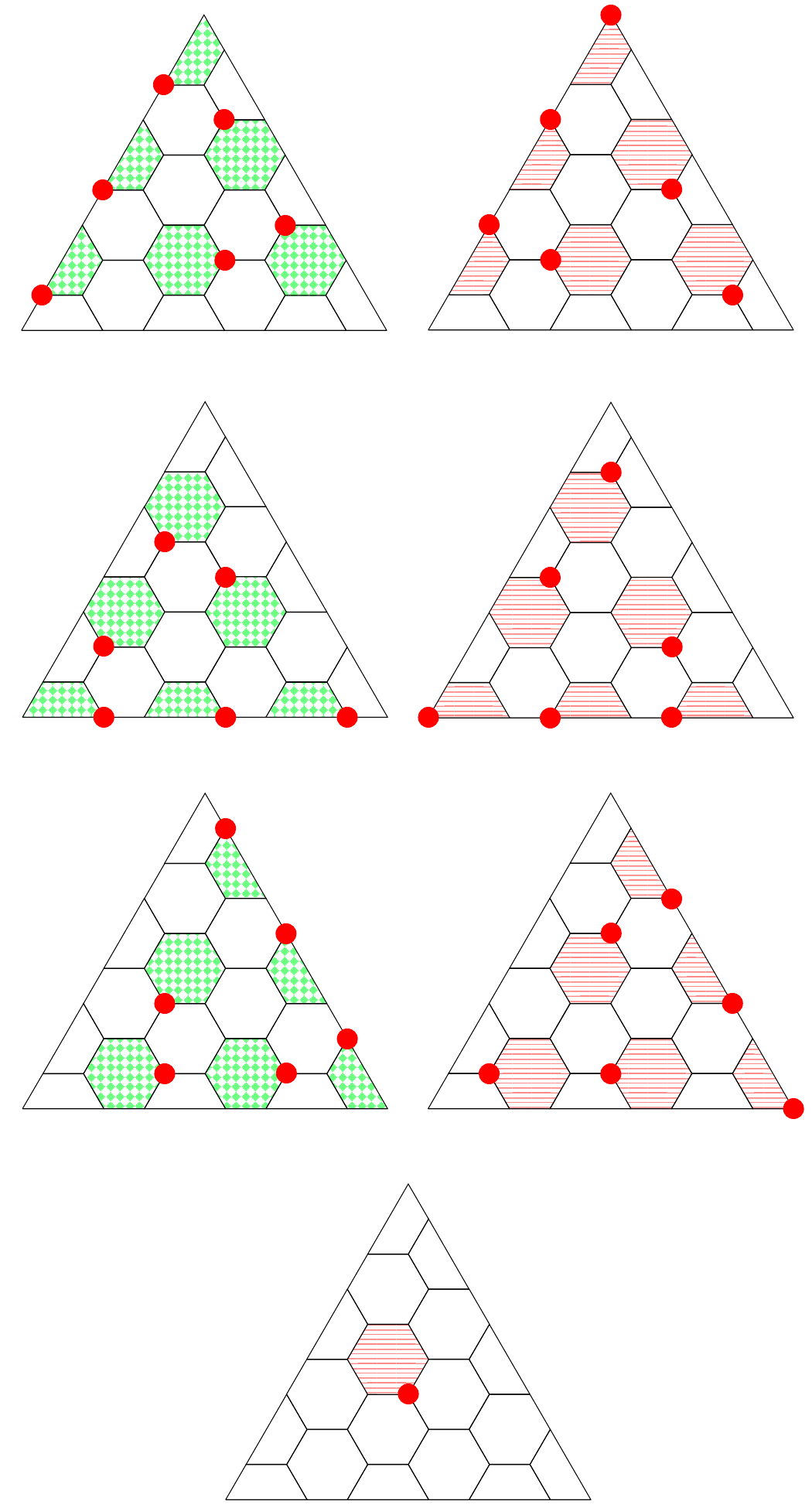}}
	\caption{
		The figure illustrates an optimized syndrome measurement sequence for stabilizers in sets 1 to 37 of \(\llbracket 74,1,14_f \rrbracket\) code. Each pink striped plaquette corresponds to a stabilizer supported on \(Z'\) at the vertex marked by a red circle, and supported on \(Z\) at all other vertices of that plaquette. Similarly, each green checkered plaquette corresponds to a stabilizer supported on \(X'\) at the vertex marked by a red circle, and supported on \(X\) at all other vertices of that plaquette.}
	\label{fig:example_3_recipe_1}
\end{figure}

\begin{figure}[t]
	\begin{tabular}{c}
		\subfigure[\label{fig:rotated_surface_code}]{\includegraphics[width=0.6\linewidth]{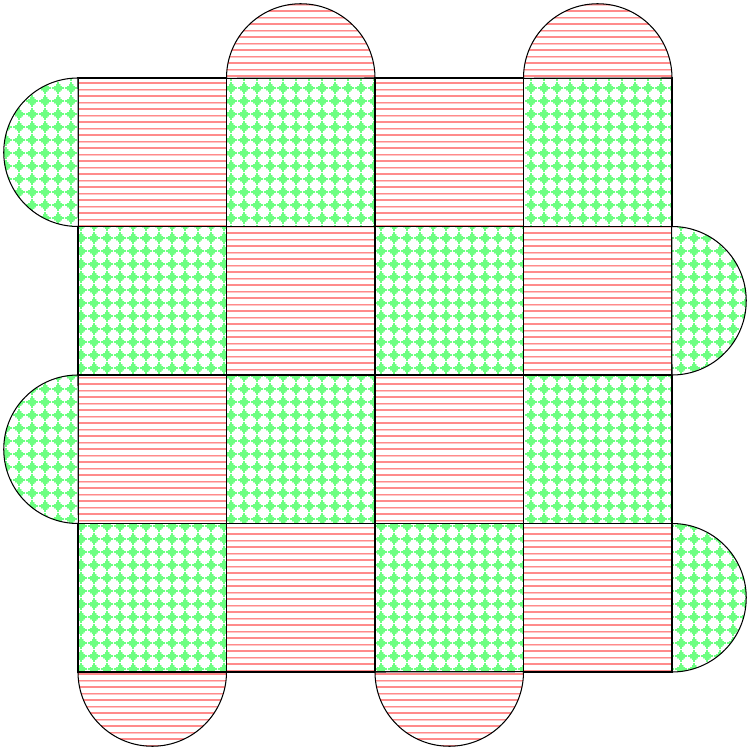}}
		\\
		\subfigure[\label{fig:rotated_surface_code_fermionic_syndrome}]{\includegraphics[width=0.9\linewidth]{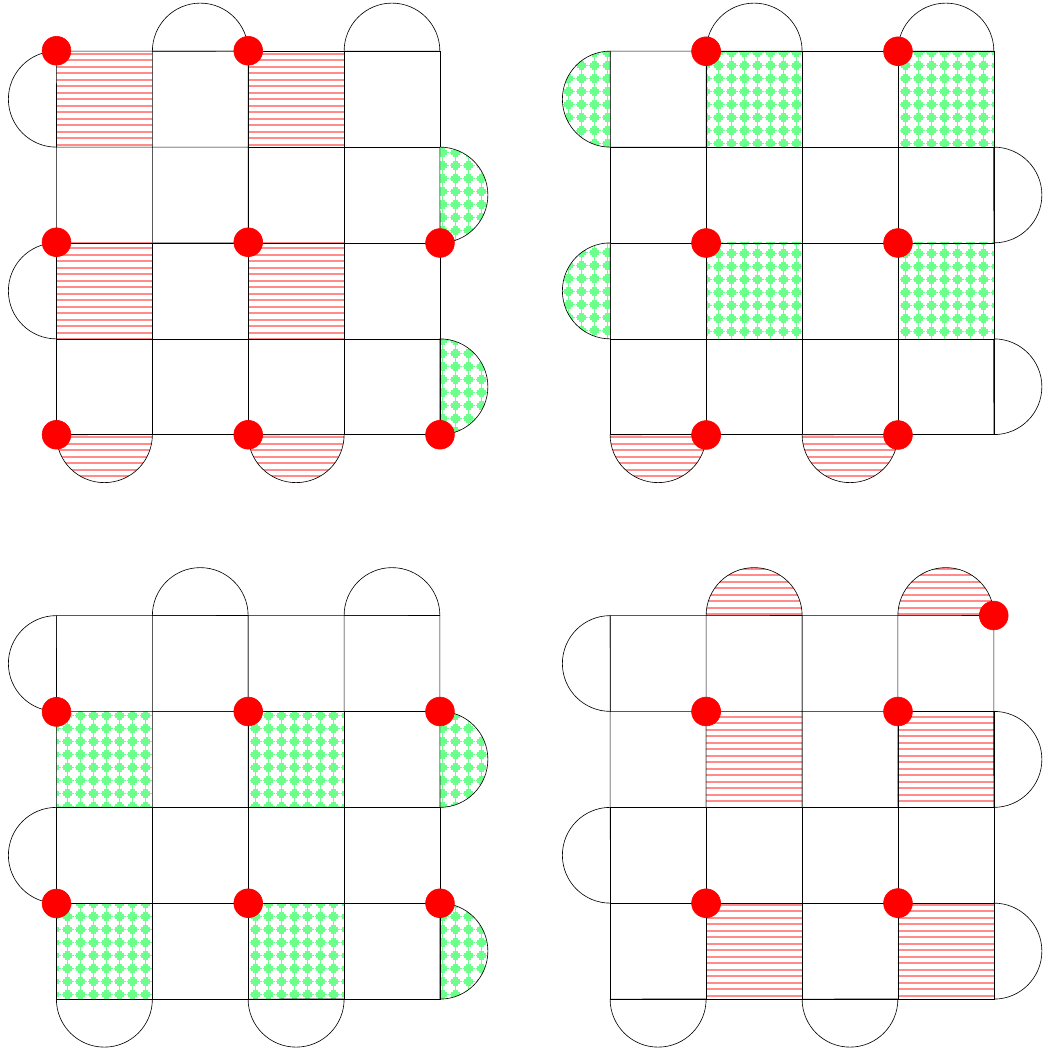}}
	\end{tabular}
	\caption{
		(a) The figure shows the stabilizers of the \(\llbracket 25,1,5_b \rrbracket\) code, which are the same as the set 0 stabilizers of the \(\llbracket 50,1,10_f \rrbracket\) code. The pink striped stabilizer plaquettes are supported on \(Z\) operators at its vertices, while the green checkered stabilizer plaquettes are supported on \(X\) operators at its vertices. The \(X\) stabilizers of set 0 require 2 steps for syndrome extraction, similarly the \(Z\) stabilizers of set 0 require another 2 steps for syndrome extraction.
		(b) The figure illustrates an optimized four-step syndrome measurement sequence for stabilizers in sets 1 to 25 of \(\llbracket 50,1,10_f \rrbracket\) code. Each pink striped plaquette corresponds to a stabilizer supported on \(Z'\) at the vertex marked with a red circle, and supported on \(Z\) at all other vertices of that plaquette. Similarly, each green checkered plaquette corresponds to a stabilizer supported on \(X'\) at the vertex marked with a red circle, and supported on \(X\) at all other vertices of that plaquette.} 
	\label{fig:surface_code_and_steps}
\end{figure}

Before proceeding to code capacity and fault-tolerance analysis, let us see another example of this fermionic code construction recipe.

\subsection{Error correction on surface codes}
\label{ssec:recipe_1_latency_surface}
In this section, we consider Majorana fermion codes derived from rotated surface codes. Its latency for fermionic error correction is 8. 

If the Majorana fermion code is described on \(n\) tetrons, then it would have additional \(n\) independent stabilizers on top of the set 0 stabilizers. In each of these \(n\) additional stabilizers, one of the tetron operators is switched from \(\mathcal R\) to \(\mathcal R'\). This defines the stabilizers for the fermionic version of the rotated surface code.

For example, let us consider the \(\llbracket 25,1,5_b \rrbracket\) rotated surface code, shown in \figref{fig:rotated_surface_code}. From this, we derive the \(\llbracket 50,1,10_f \rrbracket\) code, containing 49 independent stabilizers. Set 0 contains 12 \(X\) stabilizers and 12 \(Z\) stabilizers, and it needs 2 + 2 steps to be measured. Sets 1 to 25 contain one independent stabilizer each, and they need 4 steps to be parallelly measured, as shown in \figref{fig:rotated_surface_code_fermionic_syndrome}. Thus, one round of fermionic syndrome extraction requires 8 syndrome measurement steps, and so the fermionic error correction latency is 8.

Thus, we have analyzed the \(B \mapsto F\) codes, wherein the stabilizer group is generated by toggling the tetron operators at one tetron in one measurement operator at a time. The tetrons are included in the resultant stabilizer group. 

\section{Error analysis}
\label{sec:error_analysis}

\subsection{Code capacity}
\label{ssec:code_capacity}

We consider a simple noise model where each tetron is subjected to bosonic errors as well as fermionic errors. If the total error rate is \(p\) and the noise bias is \(\eta\), then the bosonic error rate is \(p_b = \frac{p}{\eta + 1} \) and the fermionic error rate is \( p_f = \frac{p \eta}{ \eta + 1 } \). As the bosonic errors X, Y, Z are equally likely, so \(p_X=p_Y=p_Z= \frac{p}{3(\eta + 1)} \). Similarly, the fermionic errors \( \gamma_a , \gamma_b, \gamma_c, \gamma_d \) are equally likely, so \( p_a=p_b=p_c=p_d= \frac{p \eta }{4(\eta + 1)} \). The physical error rate is given by \( p_b+ \frac{3p_f}{4}\) since all bosonic errors affect a physical qubit, but only 3 out 4 fermionic errors affect a physical qubit on a tetron. The logical error rate for the fermionic color code family is evaluated by using the BPOSD decoder for bias values \(\eta = 0.1, 1, 10\), and are plotted in \figref{fig:code_capacity}. The corresponding variation of pseudothreshold with noise bias is shown in \figref{fig:threshold_vs_bias}.

\begin{figure*}[htpb]
	\begin{tabular}{c c c}
		\hspace{0 pt}
		\subfigure[\label{fig:cc_tetron_7_bias_01}]{\includegraphics[width=0.317\linewidth]{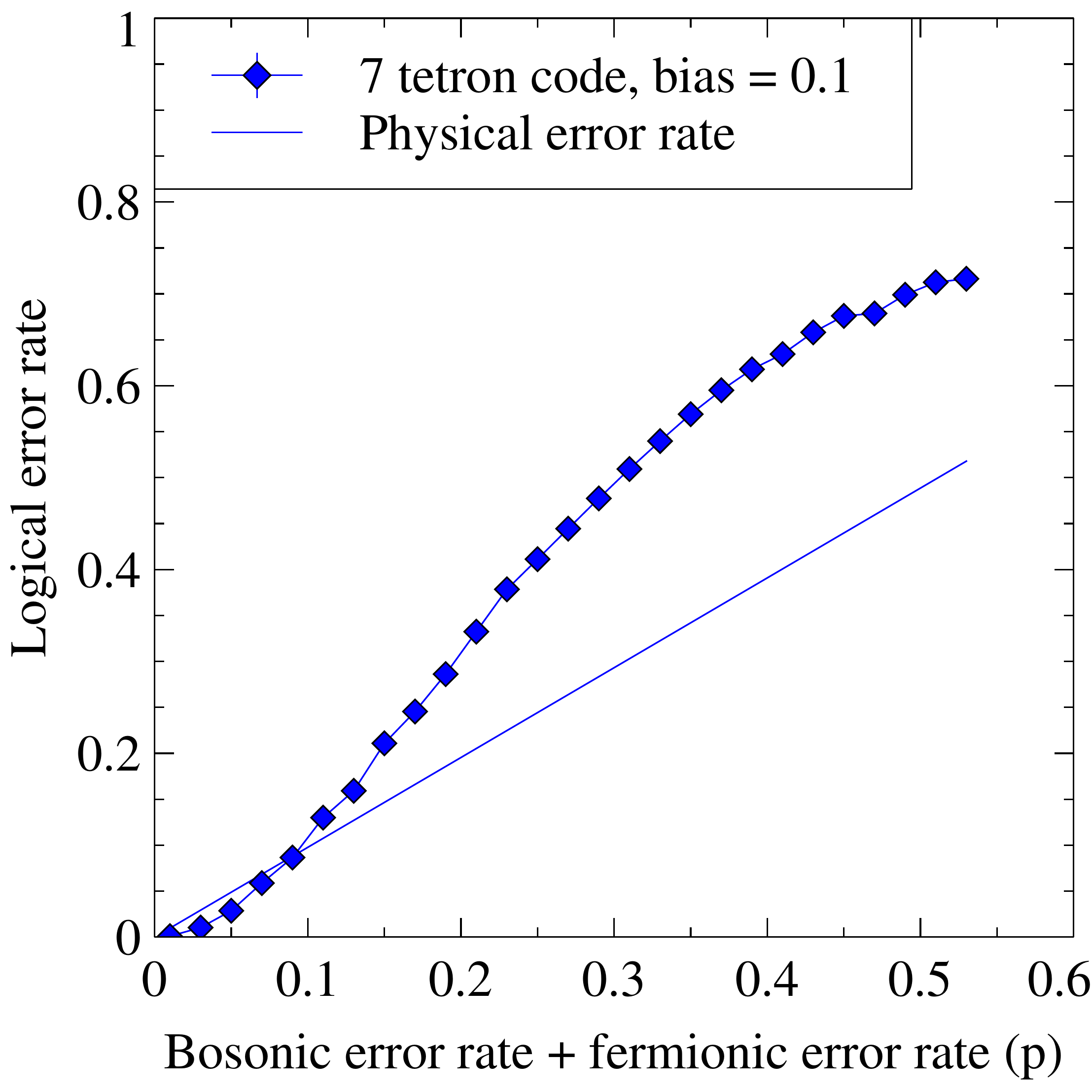}}
		\hspace{0 pt}
		&
		\subfigure[\label{fig:cc_tetron_19_bias_01}]{\includegraphics[width=0.317\linewidth]{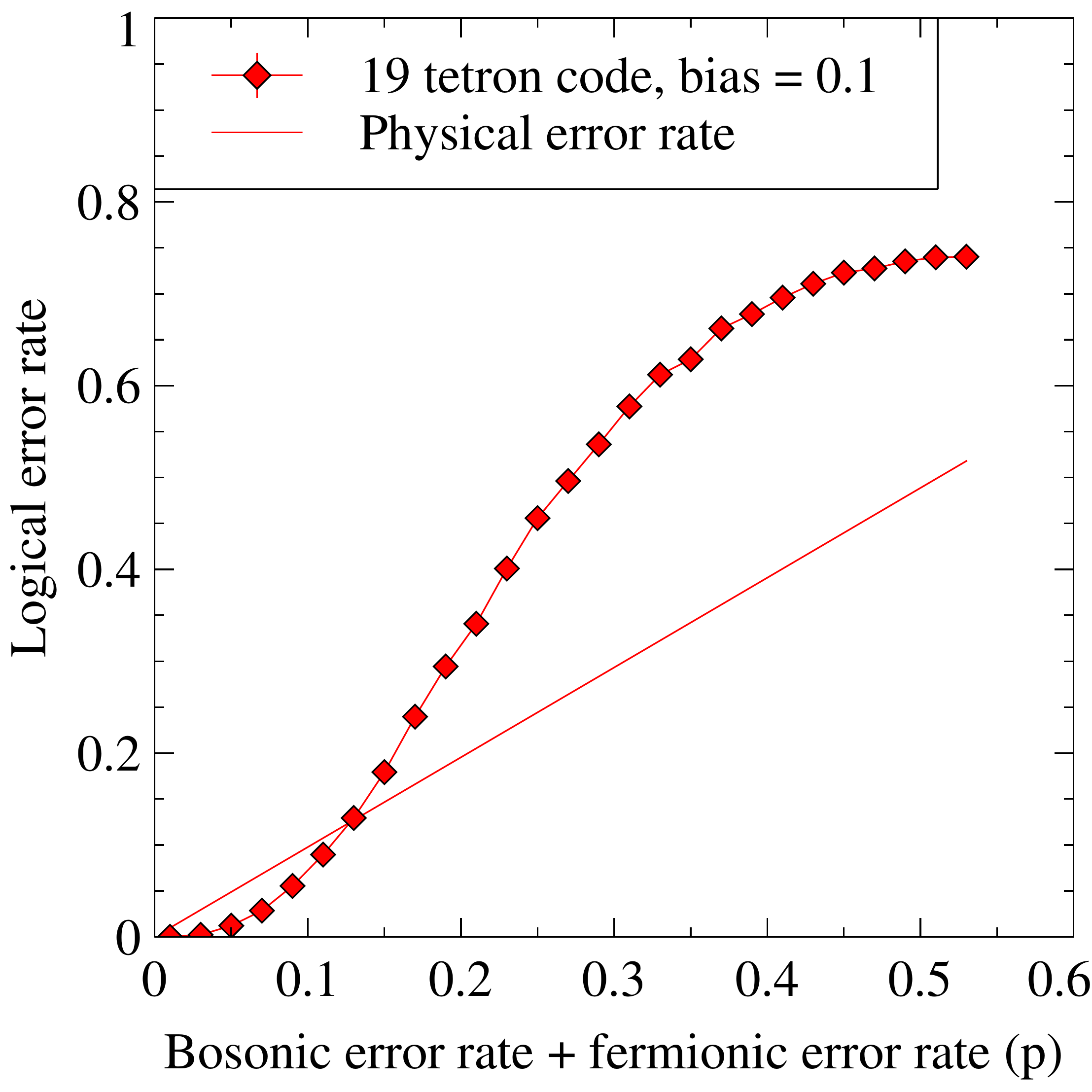}}
		\hspace{0pt}
		&
		\subfigure[\label{fig:cc_terton_37_bias_01}]{\includegraphics[width=0.317\linewidth]{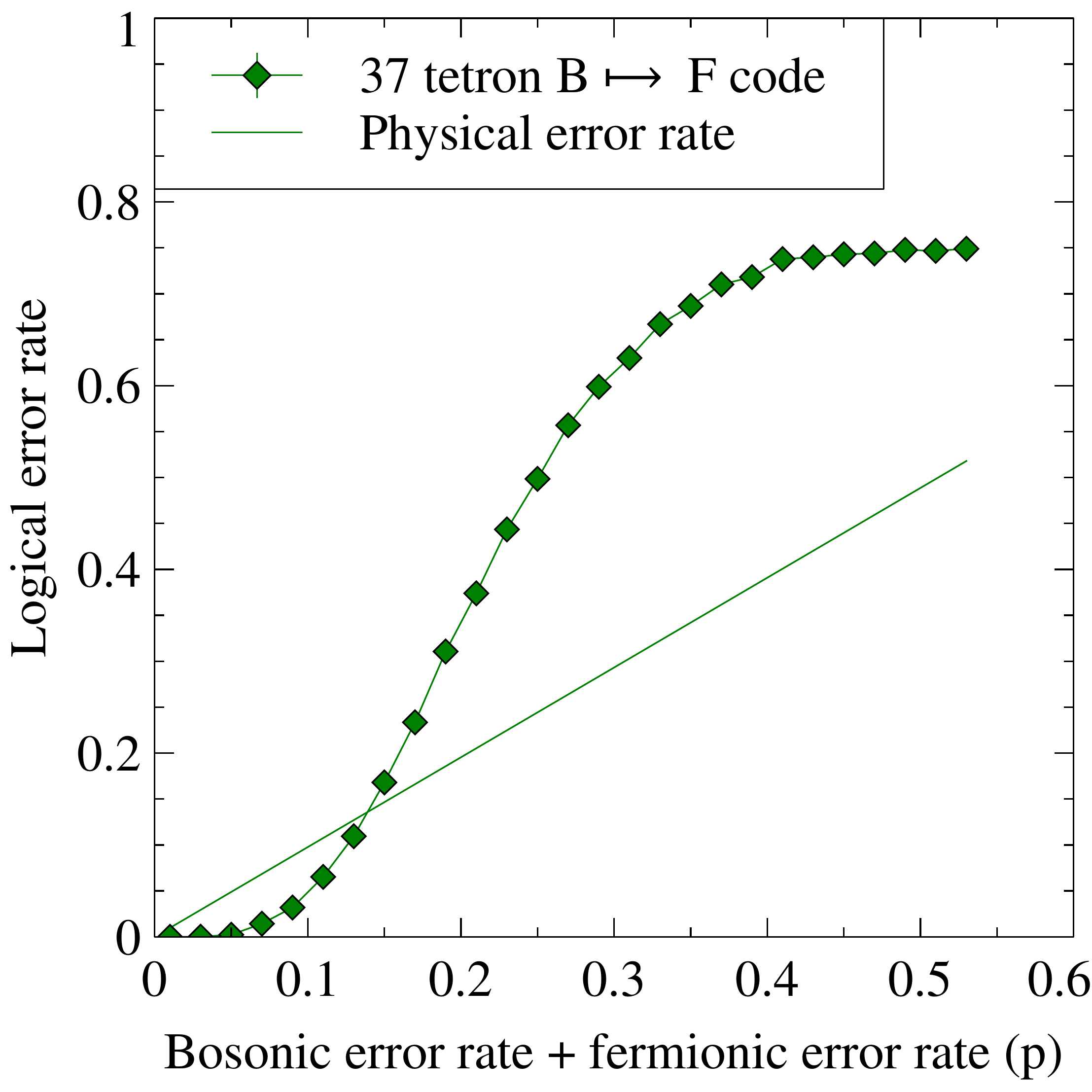}}
		\hspace{0pt}
		\\
		\hspace{0 pt}
		\subfigure[\label{fig:cc_tetron_7_bias_1}]{\includegraphics[width=0.317\linewidth]{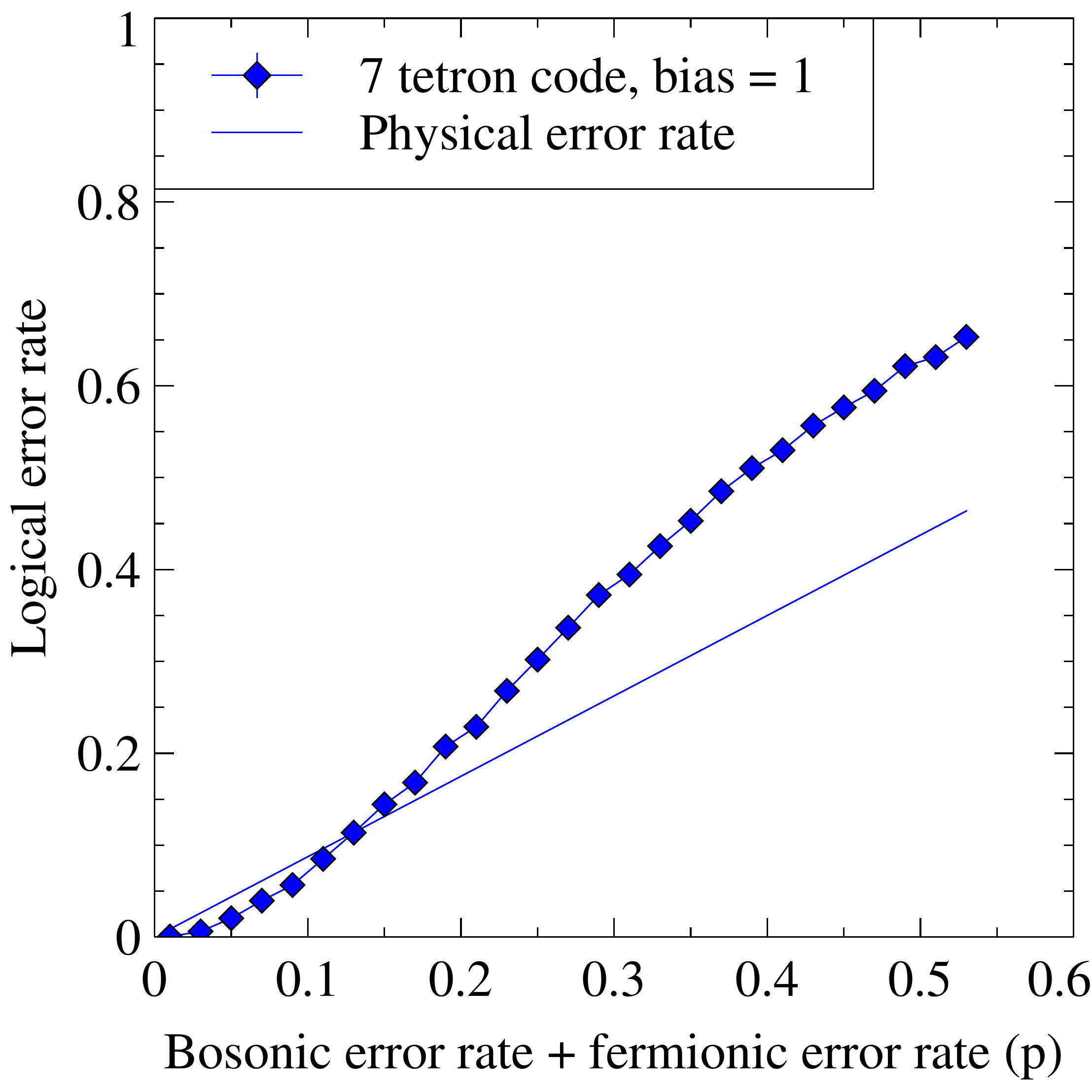}}
		\hspace{0 pt}
		&
		\subfigure[\label{fig:cc_tetron_19_bias_1}]{\includegraphics[width=0.317\linewidth]{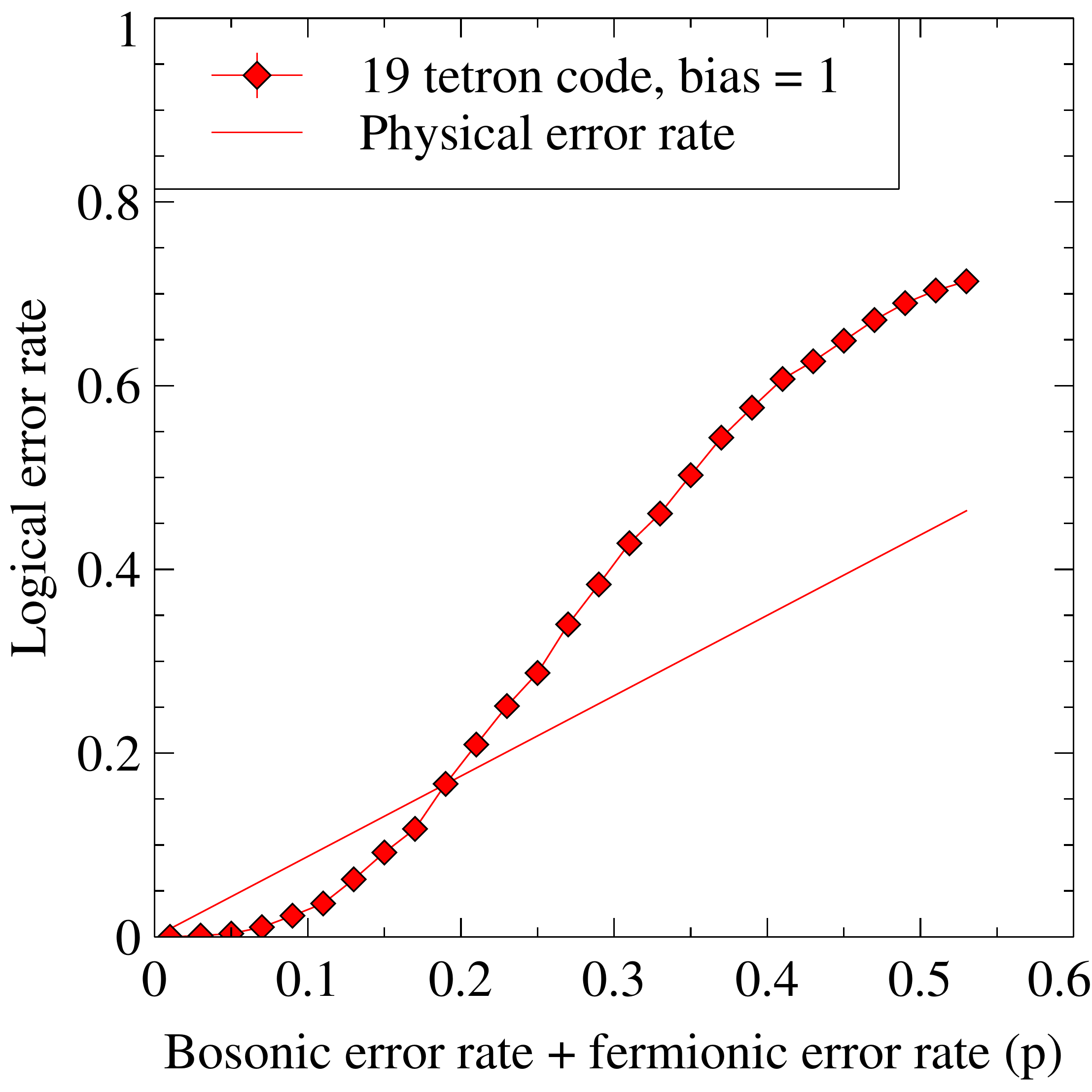}}
		\hspace{0pt}
		&
		\subfigure[\label{fig:cc_terton_37_bias_1}]{\includegraphics[width=0.317\linewidth]{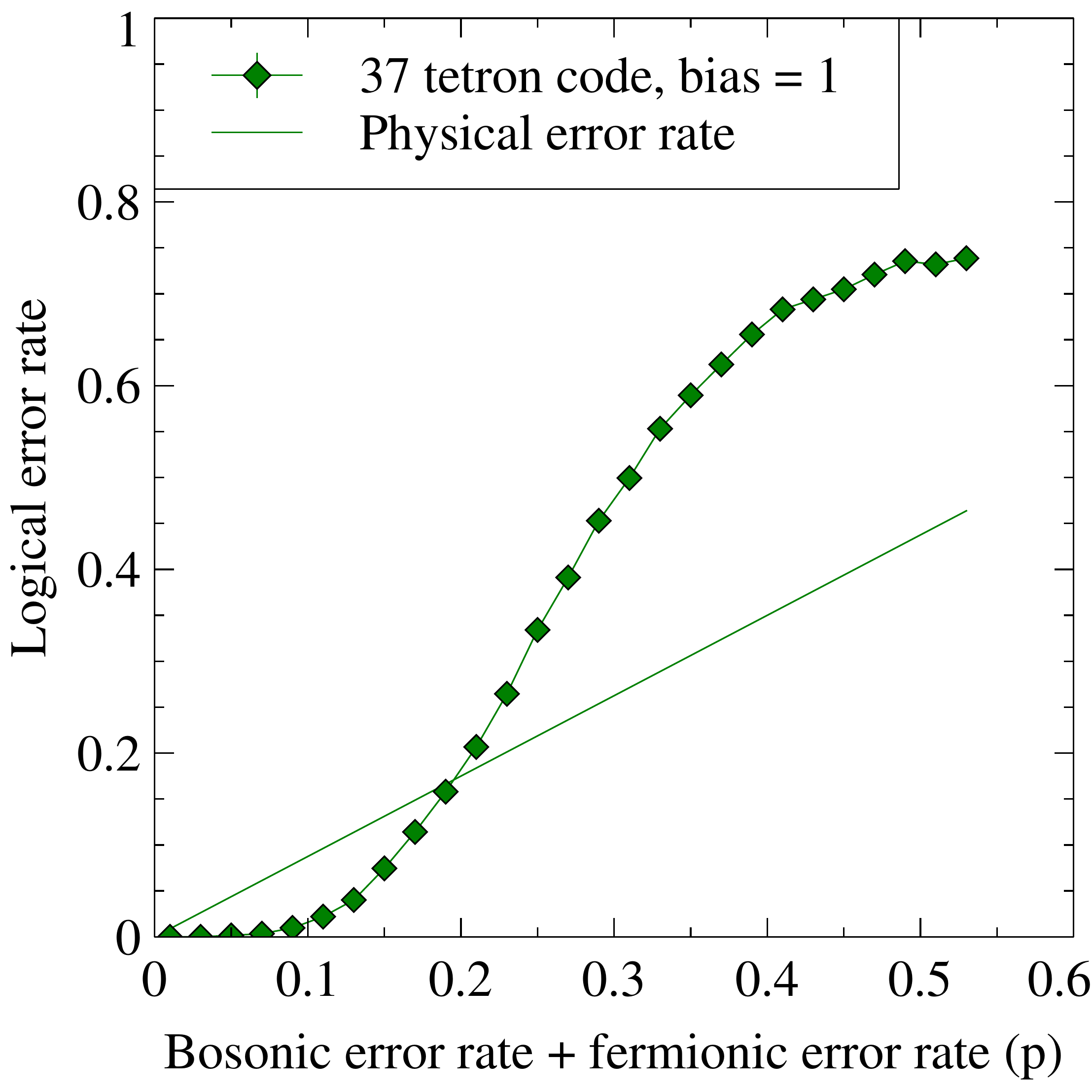}}
		\hspace{0pt}
		\\
		\hspace{0 pt}
		\subfigure[\label{fig:cc_tetron_7_bias_10}]{\includegraphics[width=0.317\linewidth]{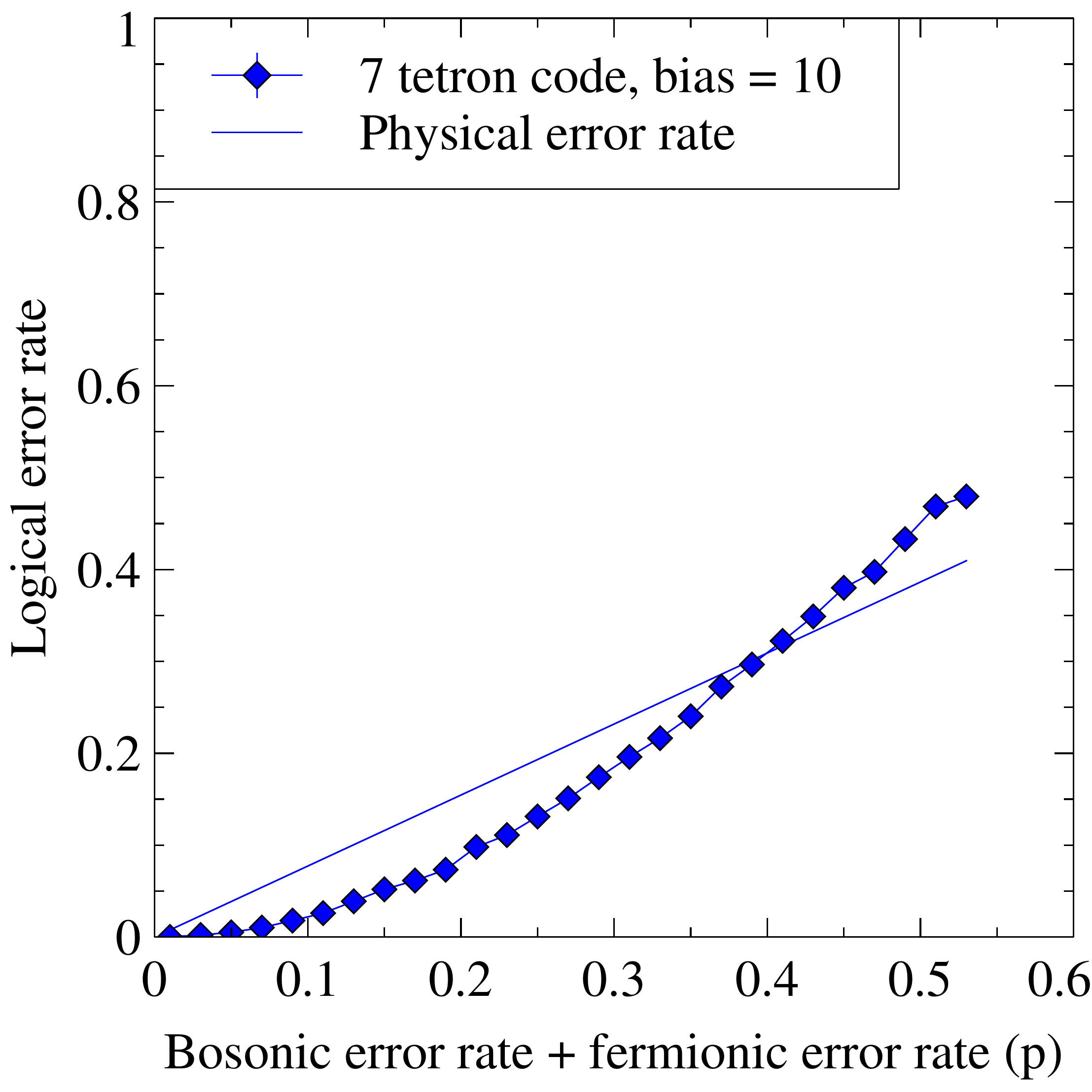}}
		\hspace{0 pt}
		&
		\subfigure[\label{fig:cc_tetron_19_bias_10}]{\includegraphics[width=0.317\linewidth]{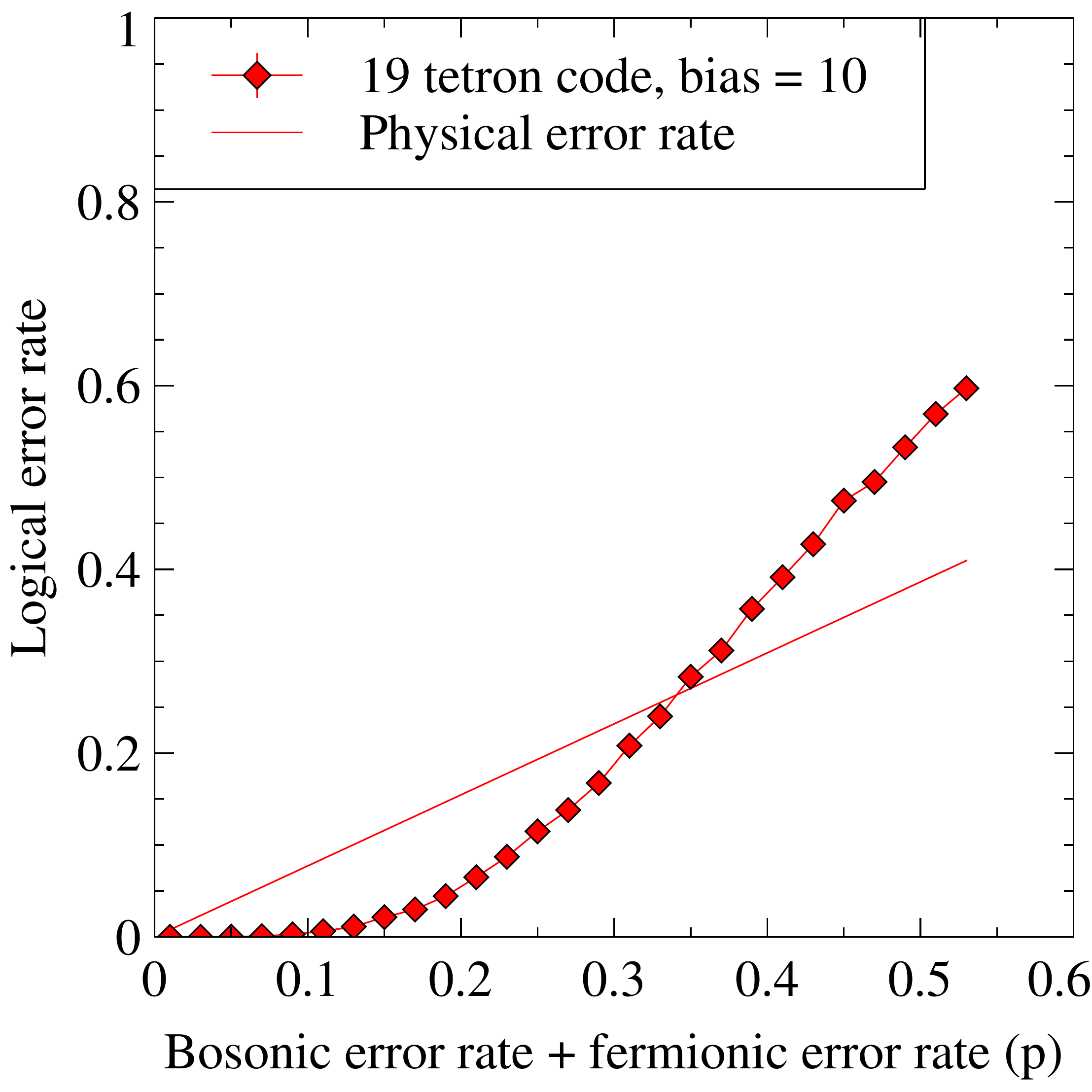}}
		\hspace{0pt}
		&
		\subfigure[\label{fig:cc_terton_37_bias_10}]{\includegraphics[width=0.317\linewidth]{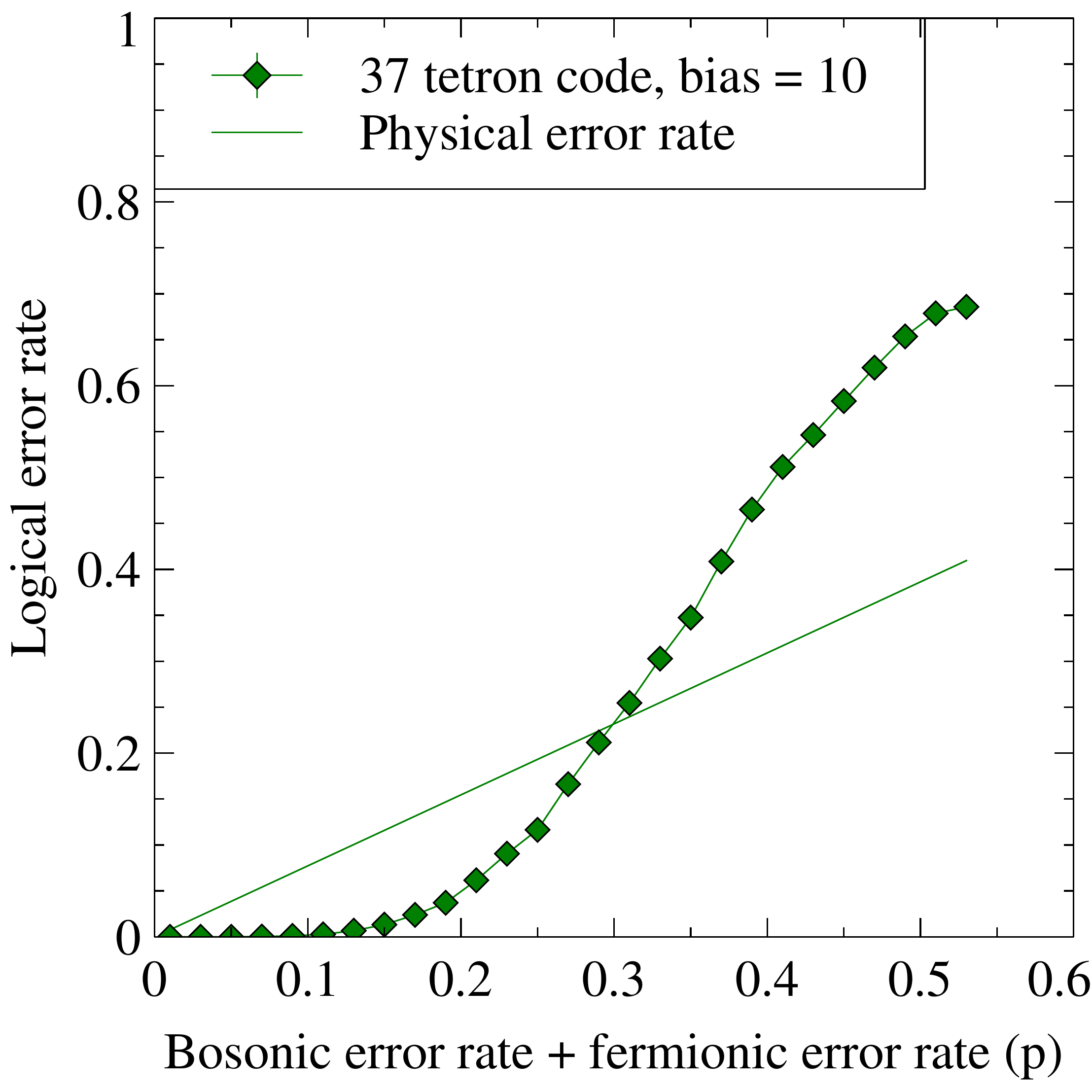}}
		\hspace{0pt}
	\end{tabular}
	\caption{
		Logical error plots for fermionic color codes with 7 tetrons (left column), 19 tetrons (middle column) and 37 tetrons (right column), along with 95\% confidence interval bars. In this case, the confidence interval bars are smaller than the marker size. 
		[Top row] The three plots show the variation of logical error rate with \(p\) for bias \( \eta = 0.1 \). 
		[Middle row] The three plots show the variation of logical error rate with \(p\) for bias \( \eta = 1 \).
		[Bottom row] The three plots show the variation of logical error rate with \(p\) for bias \( \eta = 10 \).
	} 
	\label{fig:code_capacity}
\end{figure*}

\begin{figure*}[htb]
	{\includegraphics[width=0.8\linewidth]{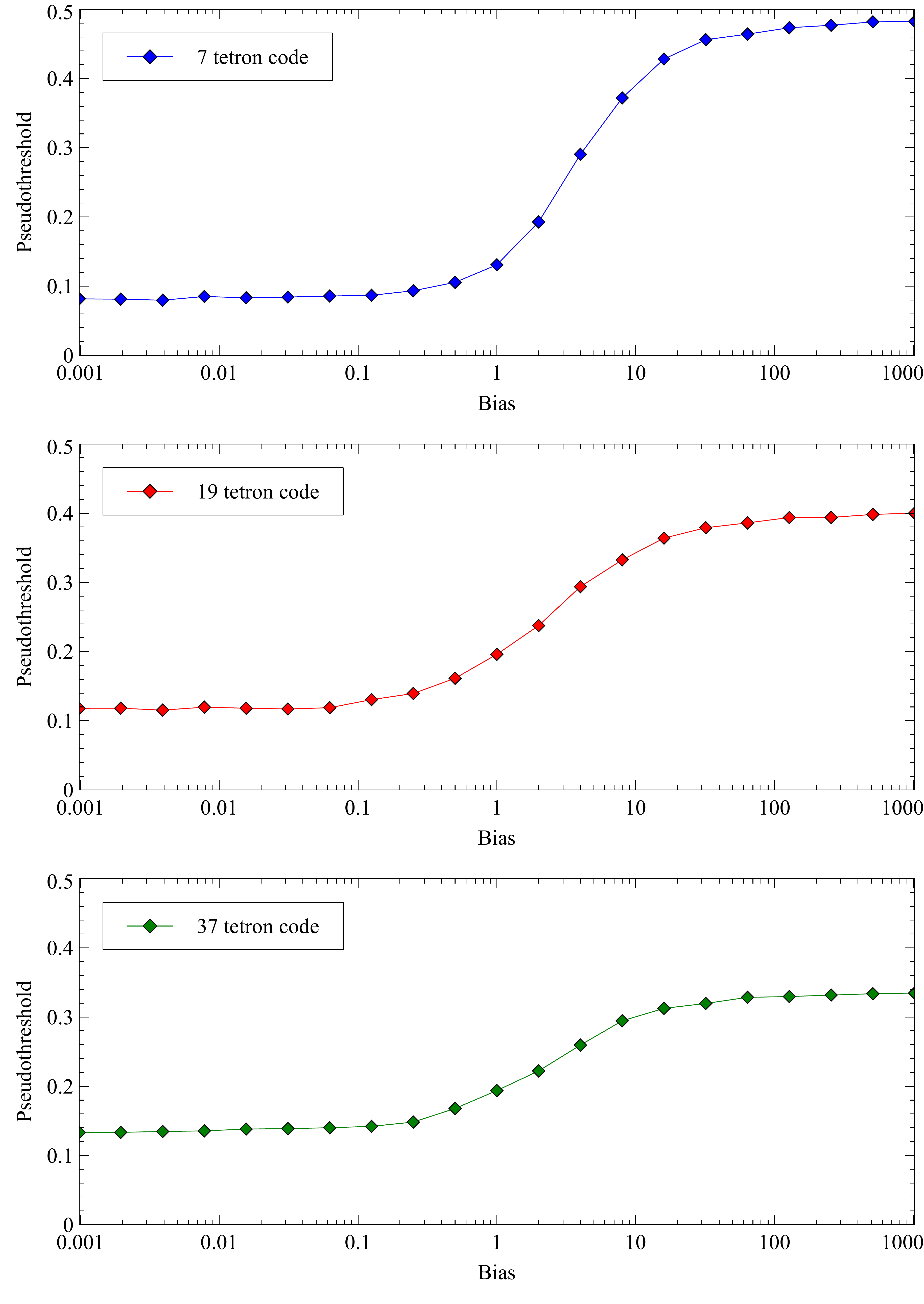}}
	\caption{
		The figure shows the variation of pseudothreshold with noise bias.
		\newline 
		The top graph shows the variation for the 7 tetron fermionic color code. 
		\newline 
		The middle graph shows the variation for the 19 tetron fermionic color code.  
		\newline
		The bottom graph shows the variation for the 37 tetron fermionic color code.  
	} 
	\label{fig:threshold_vs_bias}
\end{figure*}

\subsection{Fault tolerance}
\label{ssec:fault-tolerance}

Fault-tolerance can be achieved by additional syndrome measurements of redundant stabilizers. 

For example, \figref{fig:nsf_ft_sequence} shows a fault-tolerant sequence for the \( \llbracket 14,1,6_f \rrbracket \) fermionic color code, that can tolerate 1 input error or 1 intermediate error, either bosonic or fermionic. We analyze this sequence against a noise with \( \eta \) bias, a bosonic error rate of \(p_b = \frac{p}{\eta + 1} \), a fermionic error rate of \( p_f = \frac{p \eta}{ \eta + 1 } \), and a measurement error rate of \(p\). \Figref{fig:nsf_ft_threshold} shows the existence of a fault-tolerant threshold for noise bias \( \eta = 0.1, 1, 10\).

\begin{figure*}[htpb]
	{\includegraphics[width=0.697\linewidth]{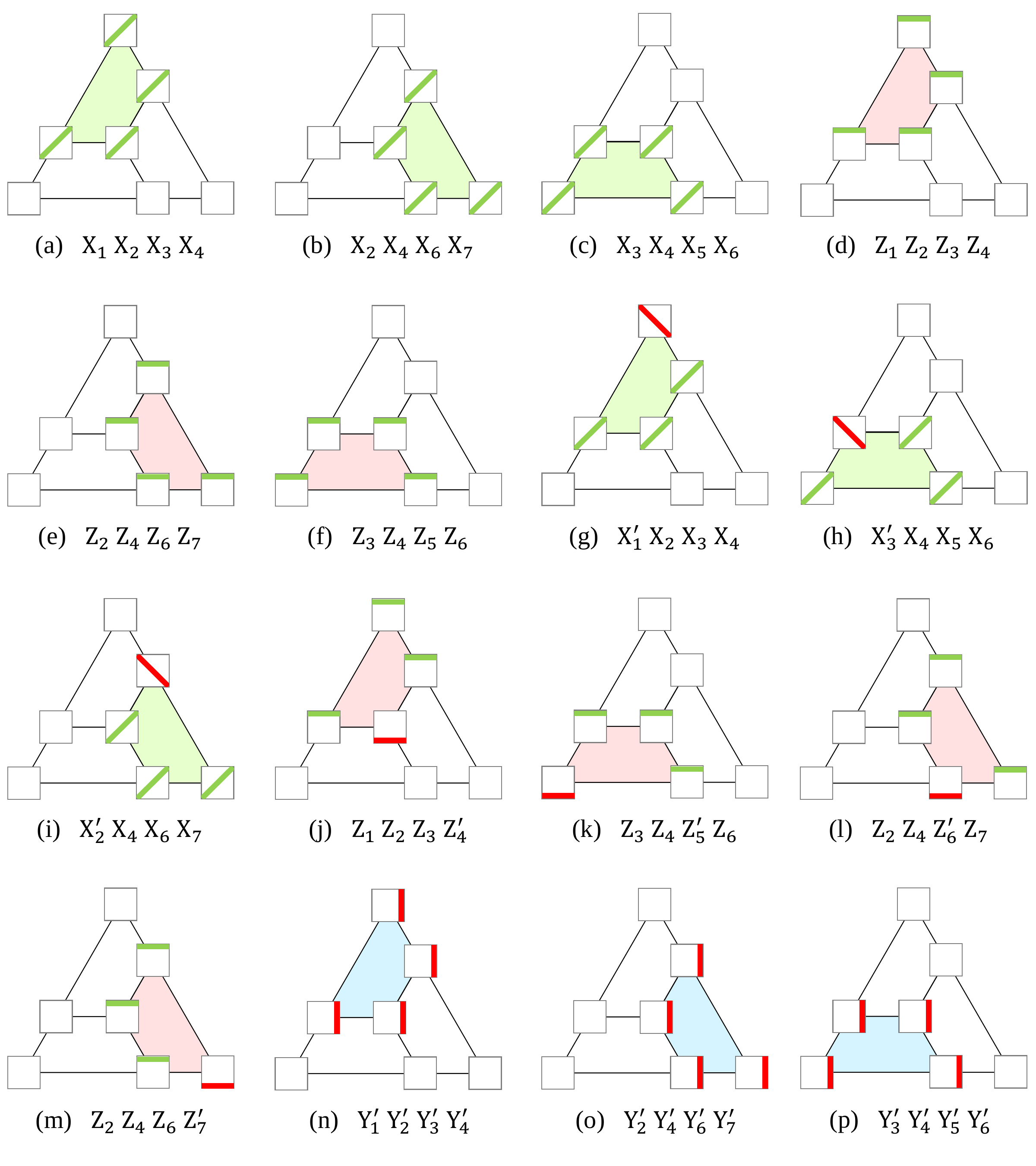}}
	\caption{
		Figures (a) to (p) demonstrate a sequence of fault-tolerant measurements for the \( \llbracket 14,1,6_f \rrbracket \) fermionic code. This sequence can tolerate one bosonic error or one fermionic error. In each step, a single stabilizer is measured, which is shaded in color. Each stabilizer is supported on four tetron operators at its vertices. A tetron is indicated by a square since it has 4 MZMs at 4 corners, and a tetron operator (such as \(Z = \gamma_a \gamma_b \)) is shown by highlighting one edge of the square (such as the top edge that connects \(\gamma_a\) and \(\gamma_b\)). The tetron operators are illustrated by the same conventions as \figref{fig:tetron_representation}. The tetrons are numbered according to \figref{fig:example_1_generators}.
	} 
	\label{fig:nsf_ft_sequence}
\end{figure*}

\begin{figure*}[htpb]
	\begin{tabular}{c c c}
		\hspace{0 pt}
		\subfigure[\label{fig:nsf_ft_bias_01}]{\includegraphics[width=0.317\linewidth]{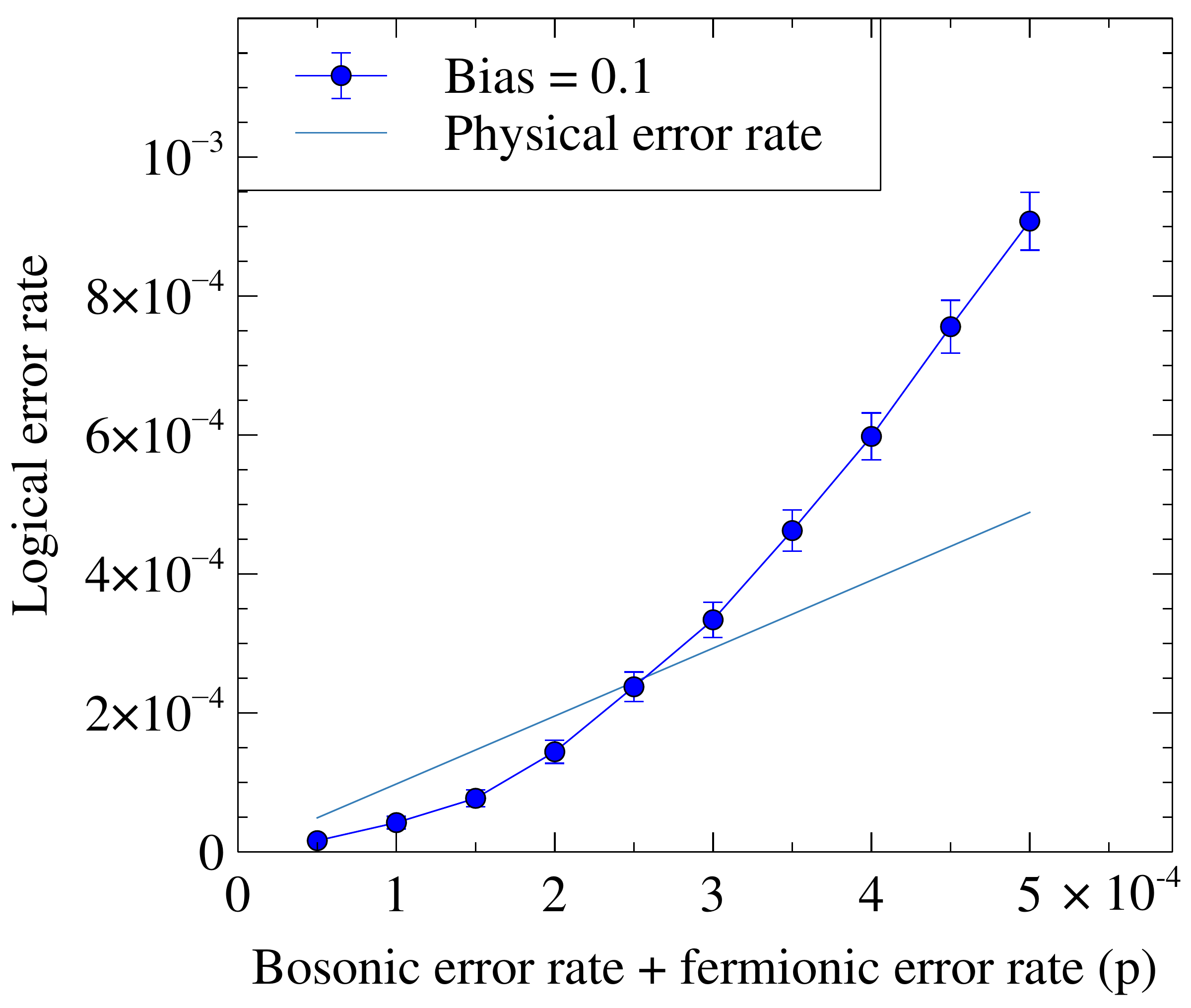}}
		\hspace{0 pt}
		&
		\subfigure[\label{fig:nsf_ft_bias_1}]{\includegraphics[width=0.317\linewidth]{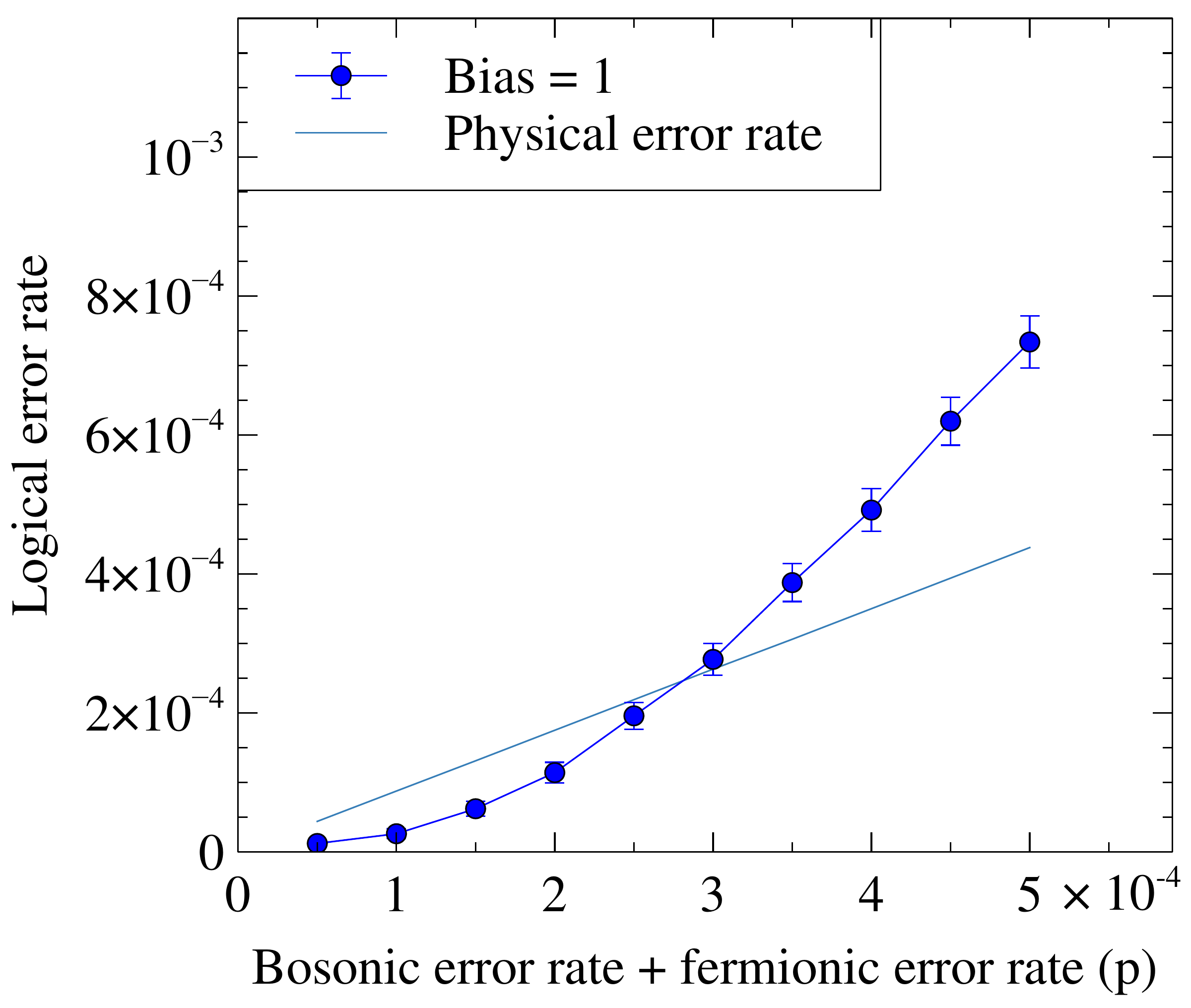}}
		\hspace{0pt}
		&
		\subfigure[\label{fig:nsf_ft_bias_10}]{\includegraphics[width=0.317\linewidth]{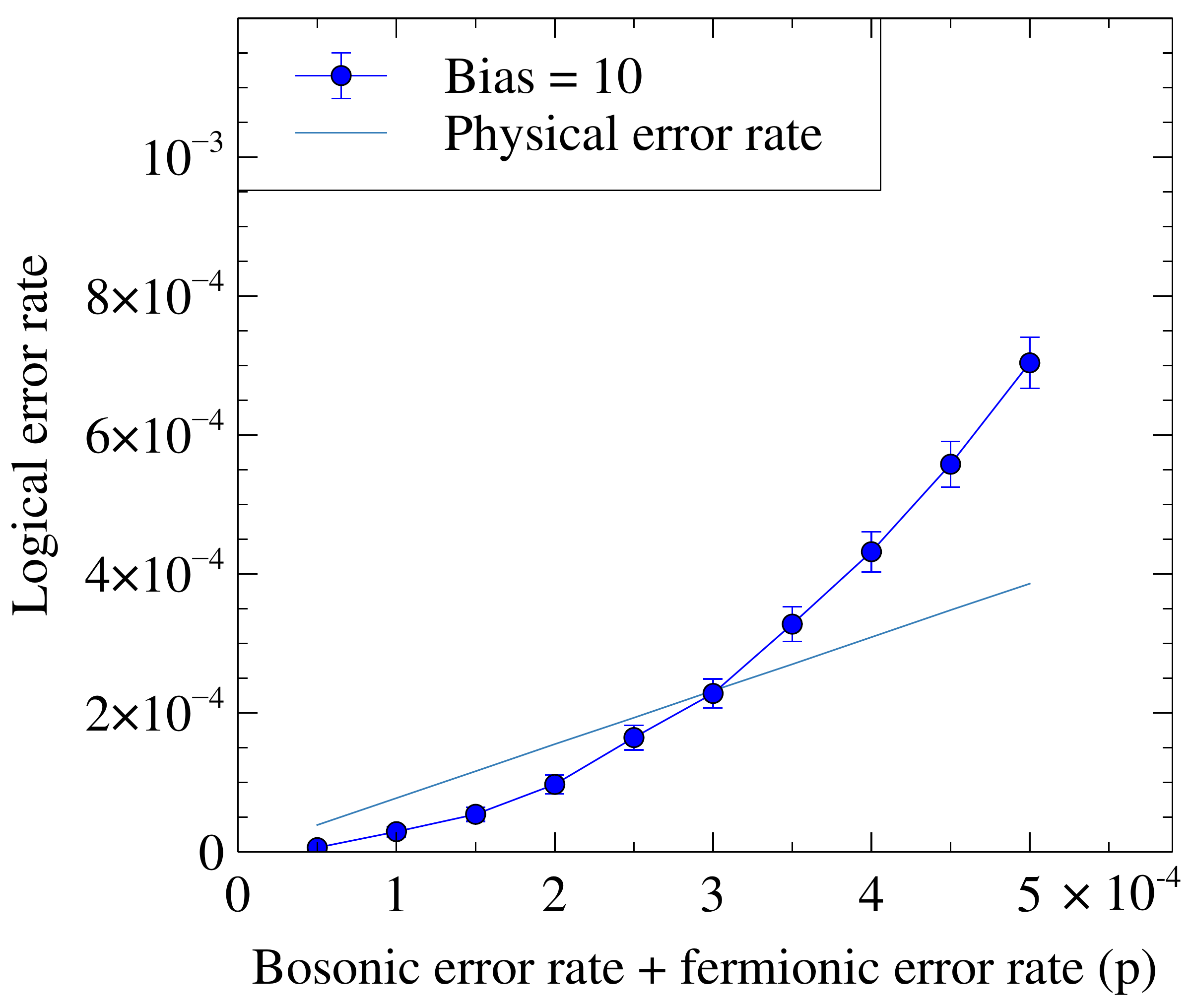}}
		\hspace{0pt}
	\end{tabular}
	\caption{
		The logical error plots for the fault-tolerant implementation of the \( \llbracket 14, 1, 6_f \rrbracket \) code are given in figures (a), (b) and (c) for bias values \(\eta = 0.1, 1, 10 \) respectively. The graph also shows the 95\% confidence intervals. 
	} 
	\label{fig:nsf_ft_threshold}
\end{figure*}

\section{Conclusion}
We have demonstrated that measurements spanning 2 MZMs per tetron are sufficient for fermionic error correction. We have derived a fermionic code family, by placing tetrons in the stabilizer group. Furthermore, we have generalized the construction of similar fermionic codes from conventional stabilizer codes.

The Majorana implementation of an arbitrary stabilizer code may require additional floating topological superconducting links, sometimes known as ``coherent links,'' for syndrome measurement \cite{Karzig2017}. The links have only 2 MZMs, so they are immune to bosonic errors, but can be affected by fermionic errors. 

This is addressed in an upcoming work, where we show that classical error correction can be used to mitigate errors on coherent links. In addition, we also propose link-free architectures for several common stabilizer code families. 
For example, \figref{fig:7_tetron_msmt_scheme} illustrates a Majorana architecture for the \( \llbracket 14, 1, 6_f \rrbracket \) code which does not require coherent links for syndrome measurement.

\begin{figure*}[htb]
	{\includegraphics[width=0.8\linewidth]{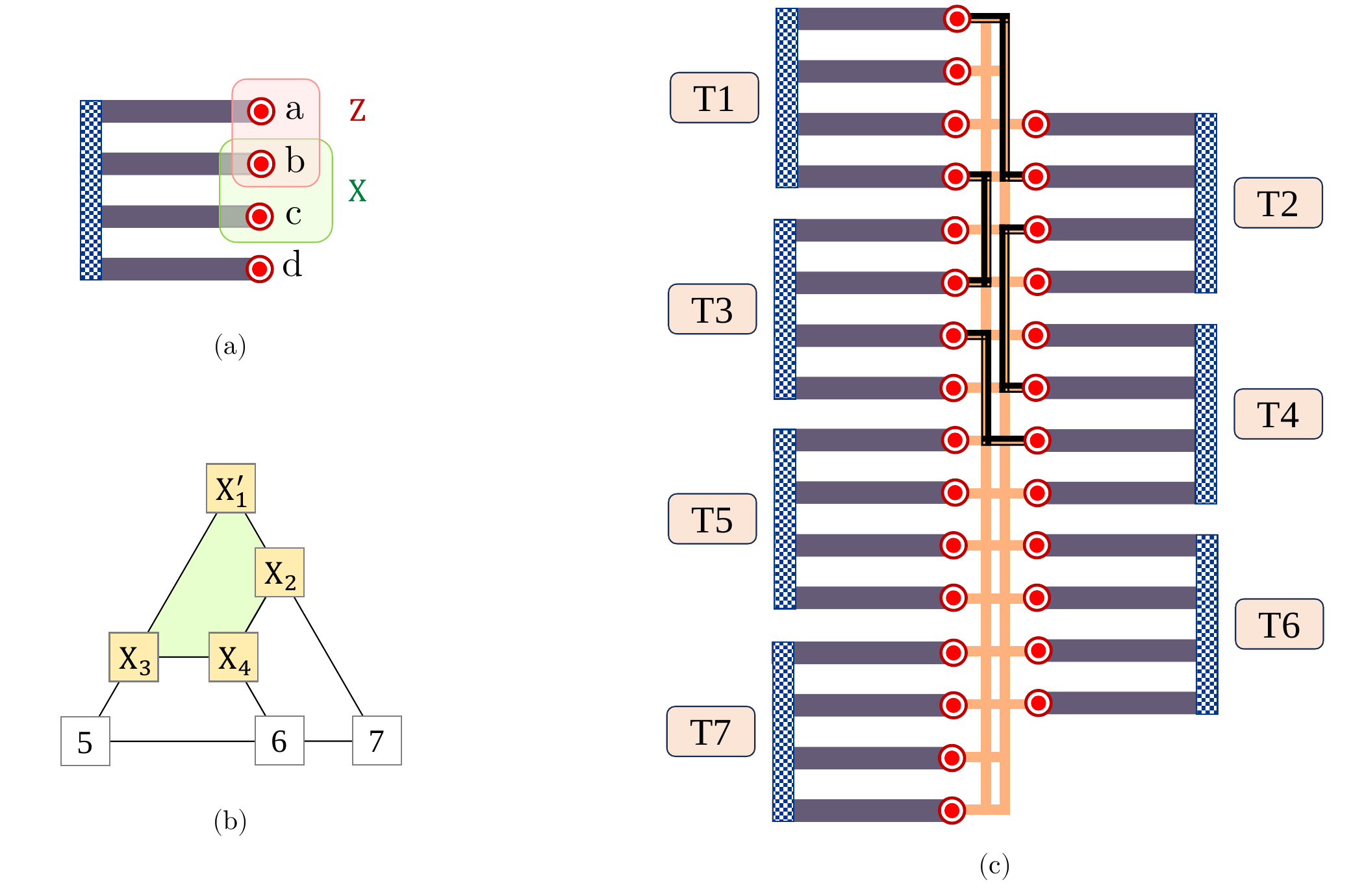}}
	\caption{
		Figure (a) shows a one-sided tetron with MZMs at locations \(a, b, c, d\). Its Pauli operators are \( X = \gamma_b \gamma_c, Z = \gamma_a \gamma_b\).
		Figure (b) highlights one stabilizer of the \( \llbracket 14, 1, 6_f \rrbracket \) code which is being measured. 
		Figure (c) shows the tetron configuration for this stabilizer code. We use doubled black lines to denote quantum dot mediated coupling, which is used to create the measurement loop corresponding to the \(X_1' X_2 X_3 X_4\) stabilizer. 
	} 
	\label{fig:7_tetron_msmt_scheme}
\end{figure*}

\begin{acknowledgments}
This work is supported by NSF grant CCF-1254119, ARO grant W911NF-12-1-0541, and MURI Grant FA9550-18-1-0161.

\end{acknowledgments}

\vspace{451pt}


%

\end{document}